\ifx\onecol\undefined
\documentclass[journal,twocolumn]{IEEEtran}
\else 
\documentclass[12pt,journal,onecolumn]{IEEEtran}
\fi

\usepackage{amsmath,amsfonts}
\usepackage{algorithmic}
\usepackage{algorithm}
\usepackage{array}
\usepackage{color}
\usepackage{cuted}
\usepackage{diagbox}
\usepackage{textcomp}
\usepackage{stfloats}
\usepackage{url}
\usepackage{verbatim}
\usepackage{graphicx}
\graphicspath{{Figures/}}
\usepackage{cite}
\usepackage{amsthm}
\usepackage{amssymb}
\usepackage{gensymb}
\usepackage{bm}
\usepackage{setspace}
\usepackage{tabularx}
\usepackage{hyperref}
\usepackage{subfigure}
\hypersetup{hidelinks, 
colorlinks=true,
allcolors=black,
pdfstartview=Fit,
breaklinks=true}

\newtheoremstyle{mytheoremstyle}%
  {\topsep}
  {\topsep}
  {\normalfont}
  {1em}
  {\itshape}
  {:}
  {.5em}
  {\thmname{#1}\thmnumber{ #2}\thmnote{ (#3)}}

\theoremstyle{mytheoremstyle}
\newtheorem{lemma}{Lemma}

\renewenvironment{IEEEproof}[1][\IEEEproofname]
  {\par\noindent\hspace{\IEEEproofindentspace}{\itshape #1: }}
  {\par}

\begin{document}

\title{Near-Field Communications with Grating Lobes \\ for Quasi-Distributed Arrays: From ULA to MRA}

\author{Mingyuan Zhou, Zhuo Xu, and Linglong Dai, \textit{Fellow, IEEE} \vspace{-2em}

\thanks{This work is supported in part by the National Science and Technology Major Projects of China under Grant 2025ZD1301800, in part by the National Science Fund for Distinguished Young Scholars under Grant 62325106, and in part by the National Key Research and Development Program of China under Grant 2023YFB3811503.

M. Zhou, Z. Xu, and L. Dai are with the Department of Electronic Engineering, Tsinghua University, Beijing 100084, China, and also with the State Key Laboratory of Space Network and Communications, Tsinghua University, Beijing 100084, China (e-mails: \href{mailto:zhoumy22@mail.tsinghua.edu.cn}{zhoumy22@mails.tsinghua.edu.cn}; \href{mailto:xz23@tsinghua.org.cn}{xz23@tsinghua.org.cn}; \href{mailto:daill@tsinghua.edu.cn}{daill@tsinghua.edu.cn}).
}
}



\maketitle

\begin{abstract}

Extremely large-scale antenna array (ELAA) has emerged as a common feature of many key candidate technologies for 6G, where the near-field characteristics become dominant. 
The quasi-distributed array can further extend the near-field range and 
utilize the near-field benefits to improve the system performance.
However, its typical implementation with modular arrays suffers from severe grating lobes that cause non-negligible inter-user interferences. 
To solve this problem, we propose the modular minimum-redundancy array (M-MRA) to suppress near-field grating lobes by redesigning the subarray configuration. 
Specifically, we first characterize the beam pattern of the conventional modular uniform linear array (M-ULA).
Contrary to the common belief that grating lobes only exist in the angle domain, we reveal that near-field grating lobes may also occur in the distance domain. 
We further analyze how to suppress near-field grating lobes for the M-ULA. 
The results demonstrate that increasing the number of antennas per module can suppress grating lobes.
In particular, the number of antennas required grows linearly with the inter-module spacing, thus the grating lobe interferences are severe under a limited number of antennas. 
This limitation inspires us to propose the M-MRA by redesigning the subarray structure. 
For M-MRA, the nonuniform antenna spacing within each subarray provides a narrower spatial envelope, allowing it to suppress near-field grating lobes in the angle and distance domains simultaneously. 
Simulation results verify that the proposed M-MRA can significantly improve the spectrum efficiency of multi-user near-field communications under the same number of antennas.

\end{abstract}

\begin{IEEEkeywords}
	Modular array, near-field beam pattern, grating lobes, near-field communications.
\end{IEEEkeywords}

\section{Introduction}
\label{section introduction}

\IEEEPARstart{F}{uture} sixth-generation (6G) wireless networks are expected to support applications such as immersive interaction and digital twins, which requires higher data rate and spectrum efficiency \cite{saad2020vision,giordani2020toward}. To realize these visions, different key candidate air-interface technologies have been proposed to meet different key performance indicators (KPIs). For example, extremely large-scale multiple-input multiple-output (XL-MIMO) and cell-free MIMO can substantially improve the system spectrum efficiency. Reconfigurable intelligent surfaces (RIS) are considered to extend the wireless coverage, and terahertz (THz) communications can support higher peak data rates. All technologies discussed above share the deployment of extremely large-scale antenna array (ELAA) as a common physical feature \cite{wang2024tutorial}, \cite{ye2024elaa}, \cite{lu2024tutorial}. By accommodating hundreds of antennas, ELAA can provide more abundant spatial degrees of freedom (DoFs) and higher spatial resolution \cite{cui2023nearfield,an2024nearfield}.


The substantial increase in the number of antennas for the ELAA leads to a fundamental change in the electromagnetic propagation characteristics, resulting in a prominent expansion of the radiative near-field range of ELAA. 
Specifically, the boundary between the radiative near-field and far-field regions is commonly characterized by the Rayleigh distance, i.e.,  $\frac{2D^2}{\lambda}$, where $D$ is the array aperture and $\lambda$ is the carrier wavelength \cite{selvan2017fraunhofer}. Consequently, as the array aperture of ELAA increases, a considerable portion of the intended users may fall within the radiative near field, where the phase variation over the array is governed by spherical wavefronts, rather than the far-field planar wavefronts \cite{bjornson2021primer,zhang2023farfield}. 

Based on the near-field\footnote{The reactive near-field range of a single antenna is only on the wavelength level. Besides, the electromagnetic energy inside the reactive near-field is in an oscillating state and will not radiate outward. Thus, ``near field'' refers to the radiative near-field region throughout this paper.} spherical wavefronts, there emerge three distinct features that can be exploited to improve the system performance. 
Firstly, unlike the far-field line-of-sight (LoS) channel with only one available data stream, the near-field LoS channel becomes highly ranked \cite{zidong2022DoF}. The DoFs are increased accordingly, which is expected to enhance the channel capacity for a single user. 
Secondly, different from far-field beamforming, near-field beamfocusing can focus energy at specific locations, allowing users at identical angle but different distances to be served simultaneously \cite{zhang2022beamfocusing}. By exploiting the additional near-field resources in the distance domain, location division multiple access (LDMA) is expected to improve the spectrum efficiency for multi-user near-field communications \cite{wu2023ldma}.
Thirdly, special near-field beams, such as Airy beam, can circumvent obstacles \cite{Airy2024}, which is promising for addressing the blockage problem.

\subsection{Prior Works}
\label{subsec:prior_works}

The extent to which these near-field benefits can be exploited is closely related to the physical aperture. 
A conventional collocated array supports centralized baseband processing, but has a limited aperture and hence limited near-field range under a fixed number of antennas. In contrast, a distributed array deploys multiple geographically separated access points. The distributed deployment can enlarge the effective array aperture and extend the near-field range \cite{ngo2017cellfree}, \cite{bjornson2020cellfree}, \cite{ngo2024ultradense}, but requires extensive exchange of channel information and strict timing calibration across multiple access points \cite{ammar2022distributed}, \cite{rogalin2014sync}, \cite{balan2013airsync}. As a compromise between the collocated and distributed arrays, the \textbf{quasi-distributed array} spatially separates multiple antenna panels or subarrays within one base station. In this way, the quasi-distributed array can extend the near-field range while retaining the benefits of centralized baseband processing. 


One of the typical implementation forms of the quasi-distributed array is the modular array, where multiple subarrays within one base station are separated by enlarged inter-module spacing \cite{jeon2021modular,li2022modularmodel}. 
By this means, the near-field range can be expanded while the precision requirement on the inter-module synchronization can be loosen. 
Thus, modular arrays have attracted growing research interest in near-field communications. 
Specifically, the concept of modular deployment and its implementation motivation were discussed in \cite{jeon2021modular}. 
On this basis, the spherical-wave channel model and the signal-to-noise-ratio (SNR) behavior were developed for modular arrays in \cite{li2022modularmodel}. 
To further characterize the spatial energy distribution, 
the near-field beam pattern for modular arrays was investigated in \cite{li2023grating}. 
Moreover, the beamwidth and beamdepth properties of the modular uniform linear array (M-ULA) with two subarrays were studied in \cite{kosasih2026modular}. 
These works have shown that, the modular array can translate the physical advantages into system-level gains. The enlarged near-field range and the higher near-field beamfocusing precision both contribute to enhancing the spectrum efficiency in multi-user near-field communication systems.

Unfortunately, for near-field modular arrays, if the modules are arranged periodically and the inter-module spacing is larger than half a wavelength, near-field grating lobes tend to emerge \cite{li2023grating,li2024multiuser}. Severe inter-user interferences are introduced accordingly, which may degrade the system performance. 
Existing works mainly suppress near-field grating lobes through algorithm designing or array configuration designing. 
At the software level, a user-grouping scheduling strategy was proposed in \cite{li2024multiuser}, which ensures that users severely affected by gating lobe interferences are not scheduled on the same time-frequency resource block. Since the user-grouping method needs to determine the complete near-field channel state information (CSI) in advance, it only performs well in quasi-static scenarios but fail in general non-static scenarios. 
To essentially suppress near-field grating lobes for modular arrays, methods through changing the array configuration are further studied. For example, extended coprime array (ECA) was investigated in \cite{zhou2025sparse}. By making the antenna spacing comprime, ECA causes an offset in the position of near-field grating lobes, thus reducing the height of grating lobes. In addition, tapered aperture sampling (TAS) was proposed in \cite{parvini2026tapered}, which utilized nonuniform sampling to redistribute the modules and break the inherent grating lobe structure. 
However, these hardware-based solutions to near-field grating lobe suppression have inherent limitations. Since ECA and TAS suppress grating lobes by changing the position of each module, they both require a sufficient number of modules. For the near-field channel model defined by the 3rd Generation Partnership Project (3GPP) in June, 2026 \cite{3GPP2026}, the number of modules along a certain direction is no more than 4. In this case, the performance of suppressing near-field grating lobes by changing the position of each module is almost negligible.

\subsection{Our Contributions}
\label{subsec:contributions}

To essentially suppress the near-field grating lobes for general modular arrays, we propose the modular minimum-redundancy array (M-MRA) by redesigning the configuration within each single module\footnote{Simulation codes will be provided to reproduce the results in this article: http://oa.ee.tsinghua.edu.cn/dailinglong/publications/publications.html.}. 
The main contributions are summarized as follows:

\begin{itemize}
    \item \textbf{Near-field beam pattern analysis for M-ULA with multiple subarrays:}
    We first characterize the near-field beam pattern of an M-ULA with an arbitrary number of subarrays. In contrast to the common belief that near-field grating lobes only exist in the angle domain, we reveal that near-field grating lobes may also occur in the distance domain. This phenomenon is consistent with the additional distance-domain resource brought by near-field spherical wavefronts. Besides, we discover that the beam patterns for modular arrays with even or odd number of modules are different. For both two cases, closed-form expressions for the locations, beamwidth, and beamdepth of the grating lobes are derived.

    \item \textbf{Near-field grating lobe suppression for M-ULA:}
    Based on the near-field beam pattern above, we further analyze how to suppress near-field grating lobes for M-ULA. Our analysis demonstrates that, increasing the number of antennas per module can suppress near-field grating lobes. Specifically, the number of antennas per module should grow linearly with the inter-module spacing. Beyond the common belief that the inter-module spacing can be enlarged blindly to expand the near-field range, we point out the limitation. To avoid severe grating lobe interferences for an M-ULA, if the inter-module spacing is to be increased, each subarray should be equipped with a large number of antennas.

    \item \textbf{Antenna-efficient near-field grating lobe suppression by M-MRA:}
    To overcome the limitation on the number of antennas required, we redesign the subarray configuration to suppress near-field grating lobes. Specifically, we propose the M-MRA structure. Unlike the M-ULA, whose antenna spacing within each subarray is uniform, each module of M-MRA follows a minimum-redundancy placement. The nonuniform intra-module spacing of M-MRA provides an envelope with a narrower main lobe, thereby attenuating grating lobes in both the angle and distance domains. Simulation results verify that, compared to the conventional M-ULA, proposed M-MRA can substantially suppress near-field grating lobes and improve the near-field multi-user spectrum efficiency under the same number of antennas.
\end{itemize}

\subsection{Organization and Notation}
\label{subsec:organization}

The remainder of this paper is organized as follows. Section~II presents the system model. Section~III presents the near-field array gain, based on which we further analyze the angle-distance two-dimensional beam pattern of M-ULA with multiple subarrays. In Section~IV, we explore how to suppress grating lobes for M-ULA and propose the M-MRA configuration. Simulation results are provided in Section~V, followed by the conclusions in Section~VI.

\emph{Notation:} Lower-case and upper-case boldface letters denote vectors and matrices, respectively. The transpose and conjugate transpose are represented by $(\cdot)^{T}$ and $(\cdot)^{H}$, respectively. The operators $|\cdot|$ denotes the absolute value, and the set of $M\times N$ complex matrices is denoted by $\mathbb{C}^{M\times N}$. The Gaussian distribution with mean $\bm{\mu}$ and covariance $\bm{\Sigma}$ is denoted by $\mathcal{CN} \left(\bm{\mu},\bm{\Sigma}\right)$.

\section{System Model}
\label{section system model}

In this section, the array model for the quasi-distributed array is first introduced. Then, the near-field channel model and signal model are presented.


\subsection{Array Model}

\begin{figure*}[t!]
    \centering 
    \subfigure[]{
    \raisebox{0.2\height}{\includegraphics[width=0.21 \linewidth]{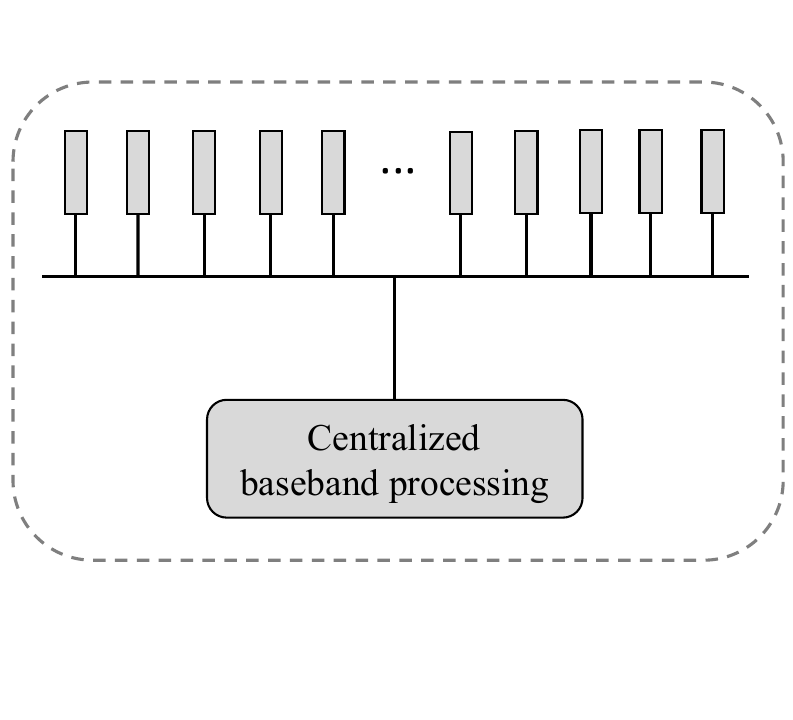}}}
    \hfill
    \subfigure[]{
    \includegraphics[width=0.26 \linewidth]{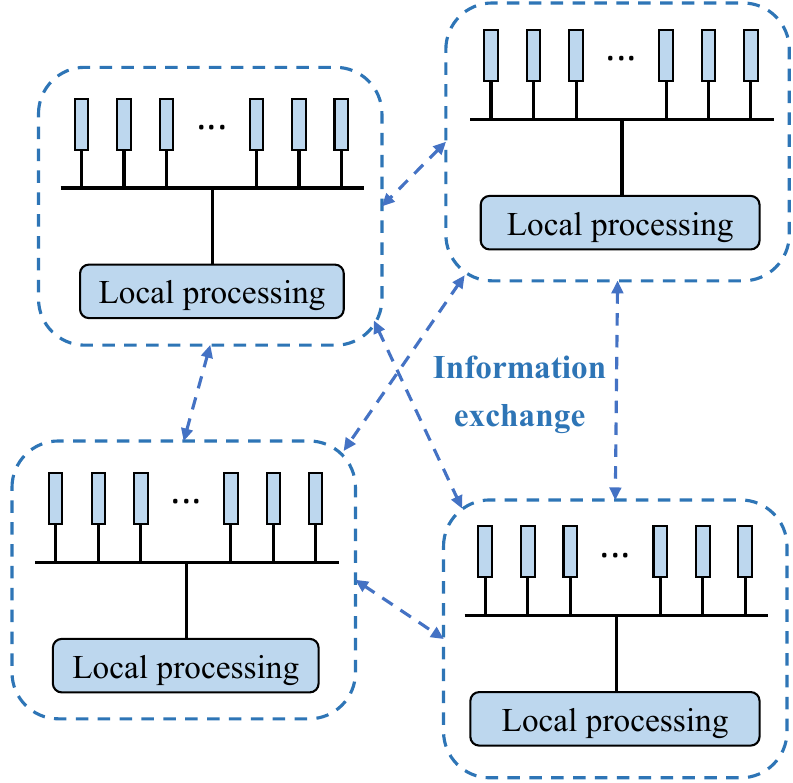}}
    \hfill
    \subfigure[]{
    \includegraphics[width=0.47 \linewidth]{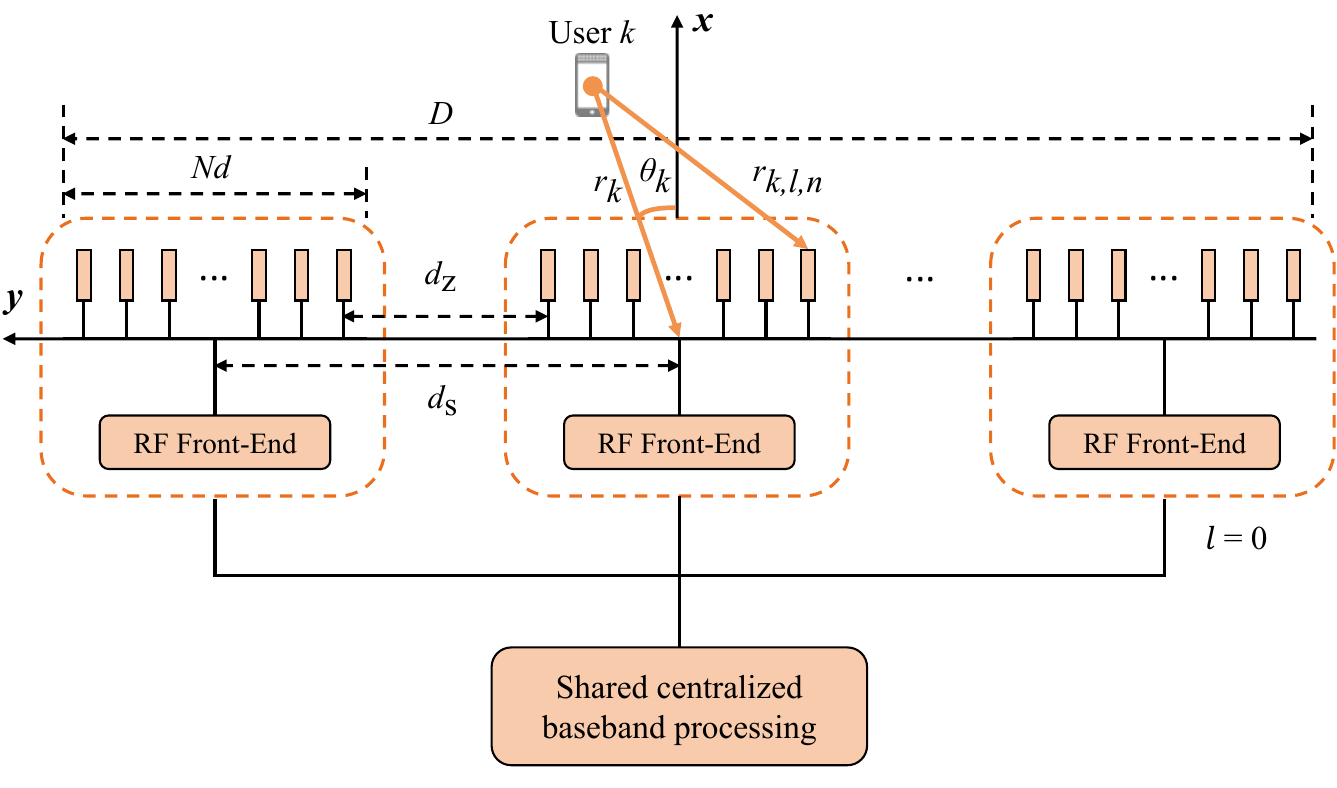}}
    \caption{Illustration of three types of arrays: (a) Collocated array; (b) Distributed array; (c) Quasi-distributed array.}
    \label{fig:System Model}
\end{figure*}

The collocated array, distributed array and quasi-distributed array are illustrated in Fig.~\ref{fig:System Model} respectively. As a compromise between the collocated and distributed arrays, the quasi-distributed array deploys multiple spatially separated antenna panels or subarrays within one base station. In this way, a quasi-distributed array can extend its array aperture and near-field range, exploiting more resources in the near-field region. Meanwhile, its centralized baseband processing can loosen the requirements on inter-module synchronization. One of the typical implementation forms of the quasi-distributed array is the modular array.

We consider a downlink wireless communication system shown as Fig.~\ref{fig:System Model} (c), where a modular array is deployed at the base station to serve $K$ single-antenna users. We denote $L$ as the number of modules and $N$ the number of antennas within each module. The total number of antennas is $LN$ accordingly. The modular array is assumed to be linearly arranged along the \textit{y}-axis, with $d_{\mathrm{s}}$ denoting the distance between the centers of two adjacent modules. With $l \in \mathcal{L} = \{ 0, \cdots, L-1 \} $ representing the module index and $n \in \mathcal{N} = \{ 0, \cdots, N-1 \} $ the antenna index within each module, we denote $Y_l$, $y_n$ and $y_{l,n} = Y_{l} + y_{n}$ as the \textit{y}-coordinate of the center of the \textit{l}-th module, the offset of the \textit{n}-th antenna within each module relative to the module center, and the absolute \textit{y}-coordinate of the \textit{n}-th antenna in the \textit{l}-th module, respectively. In addition, we denote $f$ as the carrier frequency, $\lambda$ the carrier wavelength, $d=\frac{\lambda}{2}$ the basic unit of antenna spacing, $D_{\mathrm{sub}}$ the aperture of each module, and $D$ the aperture of the whole modular array.

Specifically, if each module is arranged as a uniform linear array (ULA) with the inter-antenna spacing being exactly $d$, $y_n$ could be expressed as $y_{n} = (n-\frac{N-1}{2})d$. We denote $d_{\mathrm{z}}$ as the distance between two closest antennas of two adjacent modules, then $d_{\mathrm{s}}$, $Y_{l}$ and $D$ could be expressed as $d_{\mathrm{s}}=d_{\mathrm{z}}+(N-1)d$, $Y_{l} = (l-\frac{L-1}{2})d_{\mathrm{s}}$ and $D=(L-1)d_{\mathrm{s}}+Nd$.


\subsection{Near-Field Channel Model}

We suppose that user \textit{k} is located at $(x_k,y_k)=(r_k\cos\theta_k, r_k\sin\theta_k)$, where $k \in \mathcal{K} = \{ 0, \cdots, K-1 \} $ is the user index, $r_k$ is the the distance from user \textit{k} to the array center, and $\theta_k \in [-\frac{\pi}{2},\frac{\pi}{2}]$ is the angle with respect to the positive \textit{x}-axis. As has been discussed in Section \ref{section introduction}, the modular array 
substantially increases the effective array aperture, which in turn enlarges the near-field range, making it reasonable to assume that all users are located in the near-field region for the modular array, i.e., $\forall k\in\mathcal{K}$, $r_k\leq \frac{2D^2}{\lambda}$, where $\frac{2D^2}{\lambda}$ is the boundary between the far-field and near-field\cite{selvan2017fraunhofer}. 

Under this near-field condition, the planar wavefront assumption used in far-field communications is no longer accurate. Instead, the spherical wavefront propagation model must be adopted \cite{cui2023nearfield}. For near-field communications in high-frequency bands, non-line-of-sight (NLoS) paths suffer severe path loss and the LoS path dominates the channel propagation\cite{NoNLoS}. Based on the near-field spherical-wave model, the near-field channel from the modular array to user \textit{k}, i.e., $\mathbf{h}_k^\mathrm{near} \in\mathbb{C}^{(\textit{LN}) \times1}$, can be expressed as \cite{Cui2022channel}
\begin{equation}
  \label{h}
  \mathbf{h}_k^\mathrm{near}= \sqrt{LN}\alpha_0\mathbf{b}(r_k,\theta_k),
\end{equation}
where $\alpha_0$ denotes the complex gain of the LoS path and $\mathbf{b}(r_k,\theta_k) \in\mathbb{C}^{(\textit{LN})\times1}$ denotes the near-field steering vector for user \textit{k}. The $((\textit{l}-1)N+n)$-th element of $\mathbf{b}(r_k,\theta_k)$ can be expressed as
\begin{equation}
  \label{eq:NF-steering-vector}
  \left[\textbf{b}(r_k, \theta_k)\right]_{l,n} = \frac{1}{\sqrt{LN}} e^{-j\frac{2\pi}{\lambda}\left(r_{k,l,n}-r_k\right)},
\end{equation}
where $r_{k,l,n}$ denotes the distance between user \textit{k} and the \textit{n}-th antenna in the \textit{l}-th module. $r_{k,l,n}$ could be expressed as
\begin{equation}
\label{eq:r-precise}
\begin{aligned}
  r_{k,l,n} &= \sqrt{(r_k\text{cos}\theta_k)^2 + (r_k\text{sin}\theta_k - y_{l,n})^2}\\
  &= \sqrt{r_k^2 + y_{l,n}^2 - 2r_ky_{l,n}\text{sin}\theta_k}\\
  &\overset{(a)}{\approx}r_k - y_{l,n}\sin\theta_k + \frac{y_{l,n}^2\cos^2\theta_k}{2r_k},\\
\end{aligned}  
\end{equation}
where approximation (a) is the Fresnel approximation, which is derived by Taylor's Formula, i.e., $\sqrt{1+x}=1+\frac x2-\frac{x^2}8+\mathcal{O}(x^3)$\cite{Zhuo2025LLM}. Intuitively, this approximation retains the first-order and second-order terms of the distance expansion, which capture the angle and curvature information of the spherical wavefront, respectively. It has been proved that the approximation is accurate when $r_k$ is larger than $0.5\sqrt{\frac{D^3}{\lambda}}$, which is much smaller than the Rayleigh distance and is nearly negligible\cite{selvan2017fraunhofer}.

\subsection{Signal Model}
For the downlink communication system discussed above, we denote $\mathbf{h}_{k}$ as the near-field channel for user \textit{k} for simplicity, then the received signal at user \textit{k} can be expressed as
\begin{equation}
    \label{eq:receivedsignal}
    y_k=\mathbf{h}_{k}^H\mathbf{W}\mathbf{P}\mathbf{s}+n_k,
\end{equation}
where $\mathbf{W}=[\mathbf{w}_1, \mathbf{w}_2, \cdots, \mathbf{w}_{K}] \in \mathbb{C} ^{(\textit{LN})\times \textit{K}}$ denotes the transmit precoding matrix, with $\mathbf{w}_{k}$ representing the near-field beamfocusing vector for user \textit{k}. $\mathbf{P} = \rm{diag} \{\sqrt{\textit{P}_1},\sqrt{\textit{P}_2},\cdots,\sqrt{\textit{P}_{\textit{K}}}\}\in\mathbb{C}^{\textit{K}\times \textit{K}}$ denotes the power allocation matrix satisfying $\textstyle\sum_{k=1}^{K}P_k\leq P$, with $P$ denoting the maximum transmit power. $\mathbf{s}$ denotes the transmitted signal, which has been power-normalized under the constraint $\mathbb{E}[\mathbf{s}\mathbf{s}^H]=\mathbf{I}$, and $n_k \sim \mathcal{CN} \left(0,\sigma_k^{2}\right)$ denotes the additive white Gaussian noise (AWGN) with power $\sigma_k^{2}$. Based on the above, the received signal-to-interference-plus-noise ratio (SINR) at user \textit{k} can be expressed as
\begin{equation}
    \label{eq:SINR}
    \mathrm{SINR}_k = \frac{P_k \left|\mathbf{h}_k^{H} \mathbf{w}_k\right|^2}
    {\sum\limits_{k^{\prime}=1, k^{\prime}\neq k}^{K} P_{k^{\prime}} \left|\mathbf{h}_k^{H} \mathbf{w}_{k^{\prime}}\right|^2 + \sigma_k^2},
\end{equation}
and the corresponding spectrum efficiency upper bound can be expressed as
\begin{equation}
    \label{eq:SE}
    \mathrm{SE}_k = \mathrm{log}_2\left( 1+\mathrm{SINR}_k \right).
\end{equation}

So far, we have introduced the channel and signal models of modular array near-field communication systems, based on which the analysis of near-field beam pattern for modular arrays will be explained in the following section.

\section{Analysis of Near-Field Beam Pattern \\ for M-ULA}
\label{section_analysis}

In Section~\ref{section system model}, we have discussed the near-field modular array models. 
In this section, we first present analysis of near-field beam pattern for modular uniform linear array (M-ULA), based on which we further analyze its near-field grating lobe locations and beamfocusing spatial extent. 

\subsection{Analysis of Near-Field Beam Pattern}

The key insight that facilitates our analysis is that the beam pattern of a modular array can be decomposed into a slowly-varying envelope mainly determined by the intra-module antenna arrangement, and a rapidly-varying array factor determined by the inter-module structure. As will be shown, the grating lobe locations are dictated primarily by the inter-module factor, while the exact power is modulated by the envelope. Specifically, for a beamfocusing vector $\mathbf{w}_\textrm{F} \in \mathbb{C} ^{(\textit{LN})\times 1}$ with $(r_\textrm{F},\theta_\textrm{F})$ being the desired beamfocusing location, the beam pattern characterizing the normalized beam power at an arbitrary observation location $(r,\theta)$ can be expressed as
\begin{equation}
    \label{eq:G}
    G(r,\theta;\mathbf{w}_\textrm{F}) \triangleq \left| \mathbf{w}_\textrm{F}^H \mathbf{b}(r,\theta) \right|^2,
\end{equation}
where $\mathbf{b}(r,\theta)$ denotes the near-field steering vector modeled by (\ref{eq:NF-steering-vector}). We consider the near-field beamfocusing based on the maximum ratio transmission (MRT), under which assumption we have $\mathbf{w}_\textrm{F} = \mathbf{b}(r_\textrm{F}, \theta_\textrm{F})$. By substituting $\mathbf{b}(r,\theta)$ in (\ref{eq:NF-steering-vector}) into (\ref{eq:G}), the near-field beam pattern can be expressed as
\begin{equation}
  \label{eq:G_MRT}
  \begin{aligned}
  & G(r,\theta;r_\textrm{F}, \theta_\textrm{F}) \triangleq \left| \mathbf{b}^H (r_\textrm{F}, \theta_\textrm{F}) \mathbf{b}(r,\theta) \right|^2 \\
  = &\left| \frac{1}{LN}\sum_{l = 0}^{L-1}\sum_{n = 0}^{N-1}
  e^{j\kappa y_{l,n} \left(\sin\theta - \sin\theta_\textrm{F}\right)}
  e^{j\kappa \frac{y^2_{l,n}}{2}\left( \frac{\cos^2\theta_\textrm{F}}{r_\textrm{F}} - \frac{\cos^2\theta}{r} \right)} \right|^2,
  \end{aligned}
\end{equation}
where $\kappa=\frac{2\pi}{\lambda}$ represents the wavenumber.

For convenience of notation, we denote $\Delta=\sin\theta - \sin \theta_\textrm{F}$ as the spatial angle difference, which describes the linear phase, and $\phi=\frac{\cos^2\theta_\textrm{F}}{r_\textrm{F}} - \frac{\cos^2\theta}{r}$ as the spherical-wavefront curvature difference, which describes the quadratic phase. Specifically, the location set $\left\{(r,\theta)| \frac{\cos^2\theta_\textrm{F}}{r_\textrm{F}} = \frac{\cos^2\theta}{r} \right\}$, or equivalently $\left\{(r,\theta)|\phi=0 \right\}$, is named as the \textit{user-ring}, among which the focal point $(r_\textrm{F},\theta_\textrm{F})$ is a representative location. By substituting $\Delta$, $\phi$ and $y_{l,n} = Y_{l} + y_{n}$ into (\ref{eq:G_MRT}), we can obtain
\begin{equation}
  \label{eq:G_MRT_noation}
  \begin{aligned}
  & G(r,\theta;r_\textrm{F}, \theta_\textrm{F}) \\ = &\left|\frac{1}{LN} \sum_{l=0}^{L-1} e^{j\kappa Y_{l}\Delta}e^{j\kappa\frac{Y_{l}^2}{2}\phi} \left( \sum_{n=0}^{N-1} e^{j\kappa y_{n}\Delta} e^{j\kappa\frac{y_{n}^2}{2}\phi} e^{j\kappa Y_{l}y_{n}\phi} \right) \right|^2.
  \end{aligned}
\end{equation}

Since the aperture of each module is rather small, its near-field range is limited accordingly. For example, for an M-ULA with $N=$\ 25 at $f=$\ 15 GHz, the near-field region for each ULA is restricted within the distance of 6.25 m. Therefore, users could be considered to be in the near-field region of the whole modular array, and at the same time, in the far-field region of each module. Thus, the phase variation brought by the quadratic phase term for each module in (\ref{eq:G_MRT_noation}) is rather small and $e^{j\kappa\frac{y_{n}^2}{2}\phi}$ could be neglected\cite{kosasih2026modular}. However, the phase accumulation of the cross term $e^{j\kappa Y_l y_n \phi}$ is non-negligible, as $Y_l$ can be large. Nevertheless, its effect is confined to the inner summation over $n$. We define $C_l(r, \theta) \triangleq \frac{1}{N} \sum_{n=0}^{N-1} e^{j\kappa y_n (\Delta + Y_l \phi)}$ and (\ref{eq:G_MRT_noation}) can be rewritten as 
\begin{equation}
    \label{eq:G_C_l}
    G(r, \theta) = \left| \frac{1}{L} \sum_{l=0}^{L-1} C_l(r, \theta) \cdot e^{j\kappa Y_l \Delta} e^{j\kappa \frac{Y_l^2}{2}\phi} \right|^2,
\end{equation}
 where the outer summation term $e^{j\kappa Y_l \Delta} e^{j\kappa \frac{Y_l^2}{2}\phi}$ varies with the angle difference $\Delta$ at a period of $\Delta_{L} = \lambda / d_s$, and the large inter-module spacing $d_s \gg \lambda$. In contrast, the variation of $C_l(r, \theta)$ with respect to $\Delta$ is dictated by $y_n$, whose maximum span is bounded by the aperture of each module, i.e., $Nd$. Since $Nd$ is generally several times smaller than $d_s$, the period of $C_l$ in the $\Delta$-domain is larger than $\Delta_L$. For example, for an M-ULA with $N=$\ 25 and $d_s=$\ 2.5 m at $f=$\ 15 GHz, $Nd$ is only one-tenth of $d_s$.  Thus, within the narrow angle interval between two adjacent grating lobes dictated by the outer summation term, $C_l$ varies slowly. 

Moreover, since $y_n$ is symmetrical about the module center, $C_l(r, \theta)$ is a real-valued \textit{sinc} function. Hence, $C_l(r, \theta)$ only modifies the amplitude of each term in the outer summation, without affecting the phase condition required for the outer summation to attain its peak. In other words, $C_l$ serves as a slow-varying spatial envelope that modulates the amplitude of the grating lobes pre-determined by
\begin{equation}
    \label{eq:G_L}
    G_L(r,\theta) \triangleq \frac{1}{L} \sum_{l=0}^{L-1} e^{j\kappa Y_{l}\Delta}e^{j\kappa \frac{Y_{l}^2}{2}\phi}.
\end{equation}

Therefore, the grating lobe analysis can be decoupled into two sequential steps. 
By analyzing the peak conditions of the outer summation $G_L$, we first locate the potential grating lobes, which are independent of the internal structure of each subarray. Then, we evaluate the residual power of these grating lobes via the envelope, to determine whether they fall outside the main lobe of the envelope and are effectively suppressed. Based on above, we will analyze near-field grating lobe locations for M-ULA in the next subsection.


\begin{figure}[!t]
  \centering 
  \subfigure[]{
  \includegraphics[width=0.48 \linewidth]{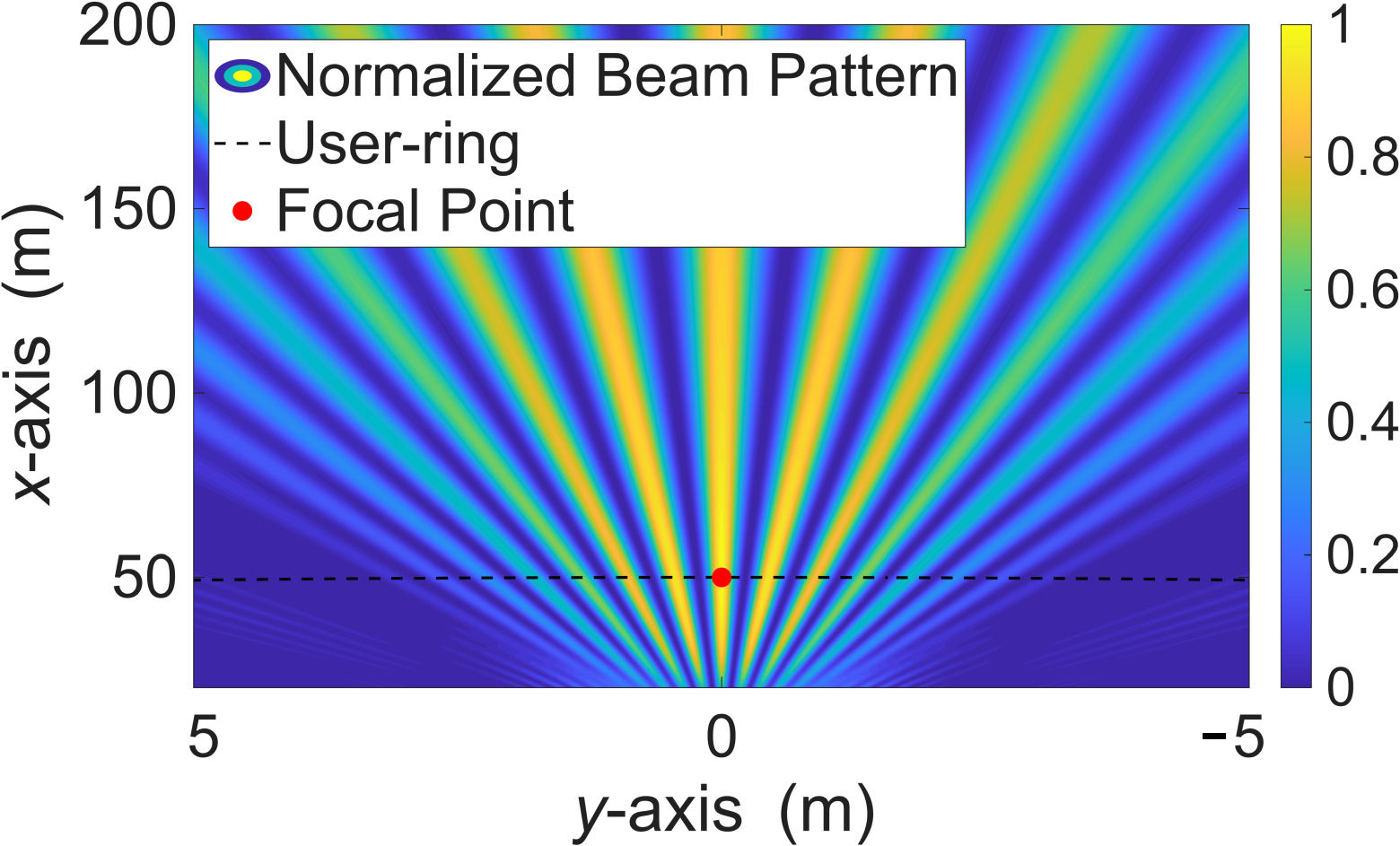}}
  \hfill
  \subfigure[]{
  \includegraphics[width=0.48 \linewidth]{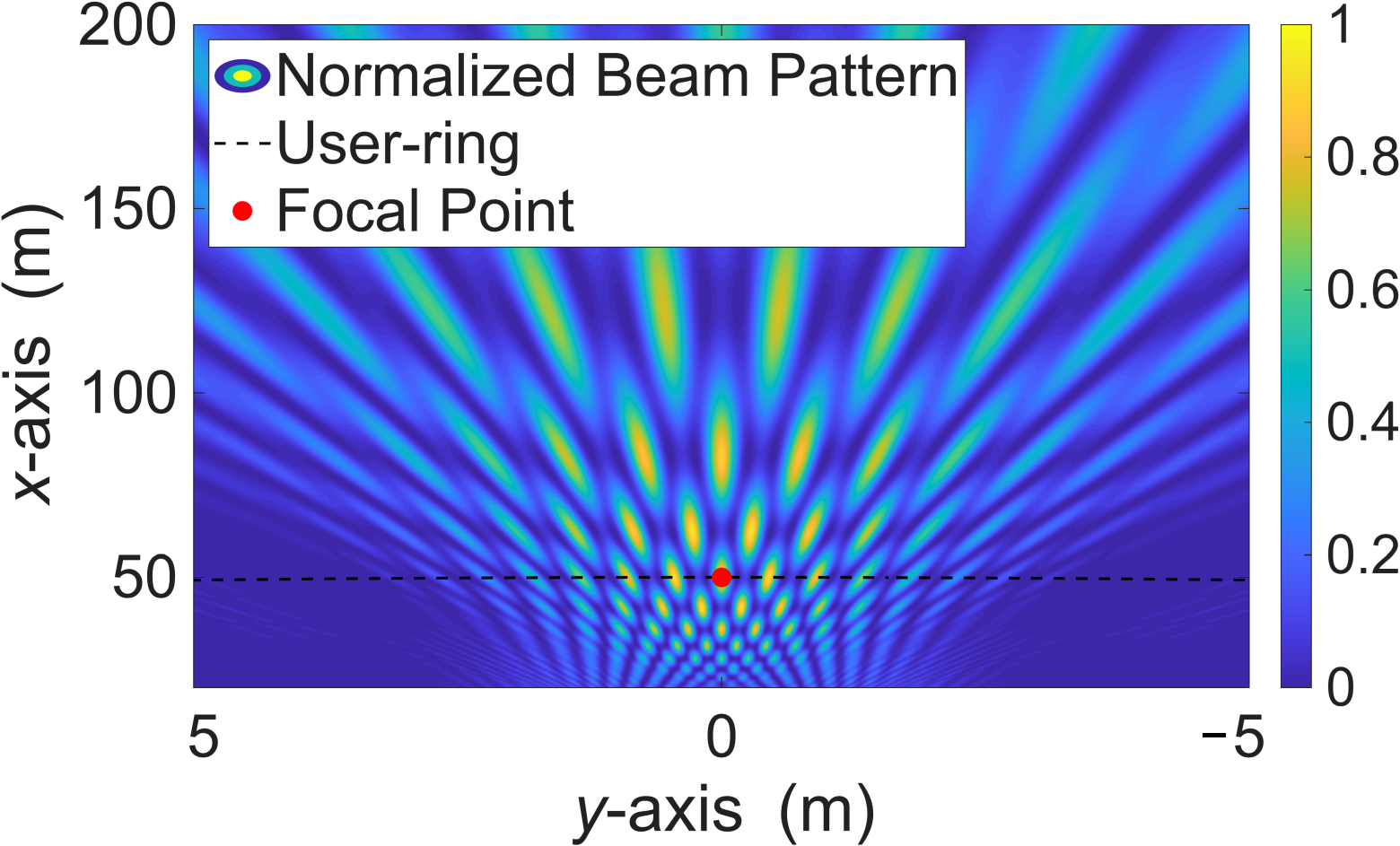}}
  \newline
  \subfigure[]{
  \includegraphics[width=0.48 \linewidth]{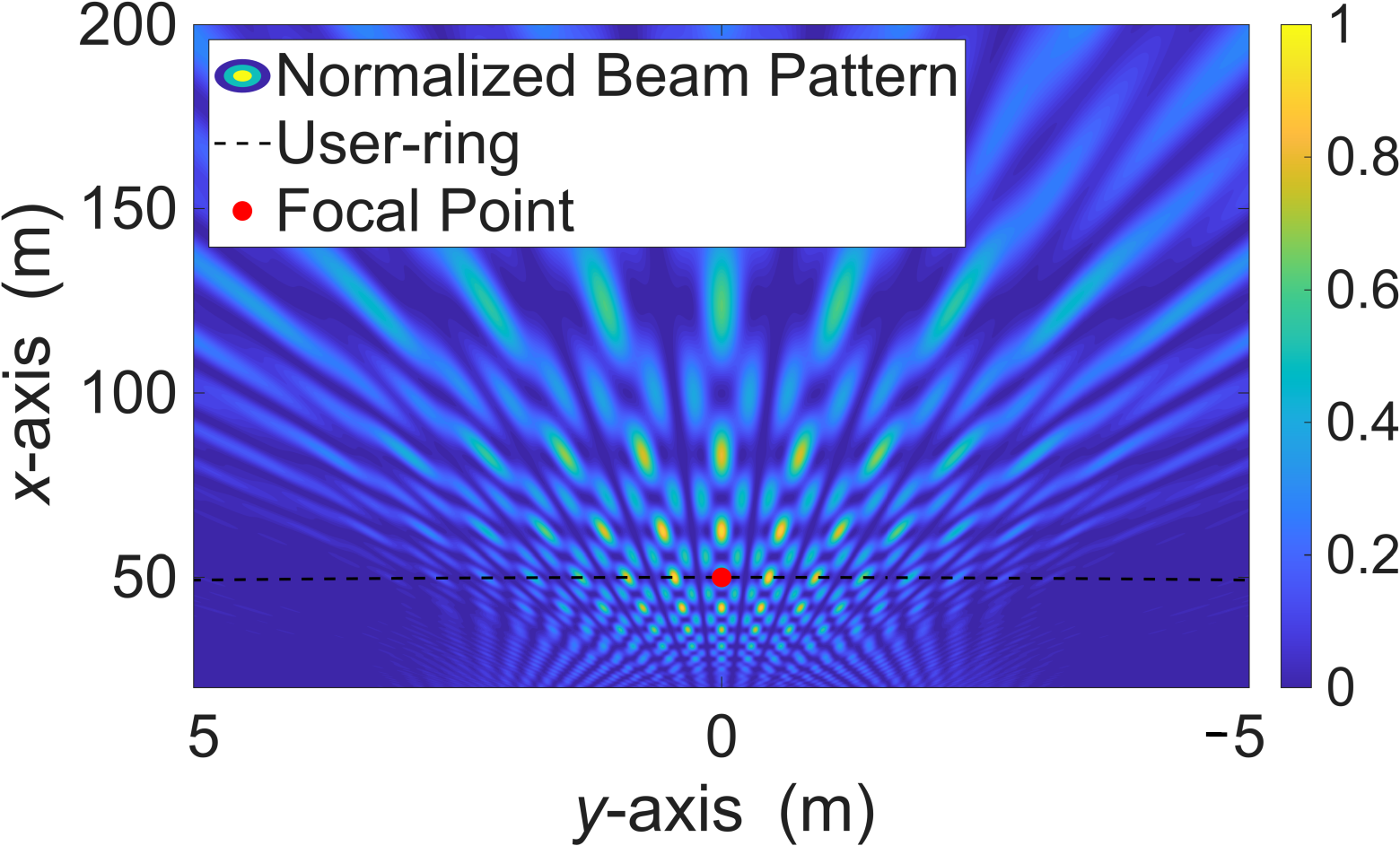}}
  \hfill
  \subfigure[]{
  \includegraphics[width=0.48 \linewidth]{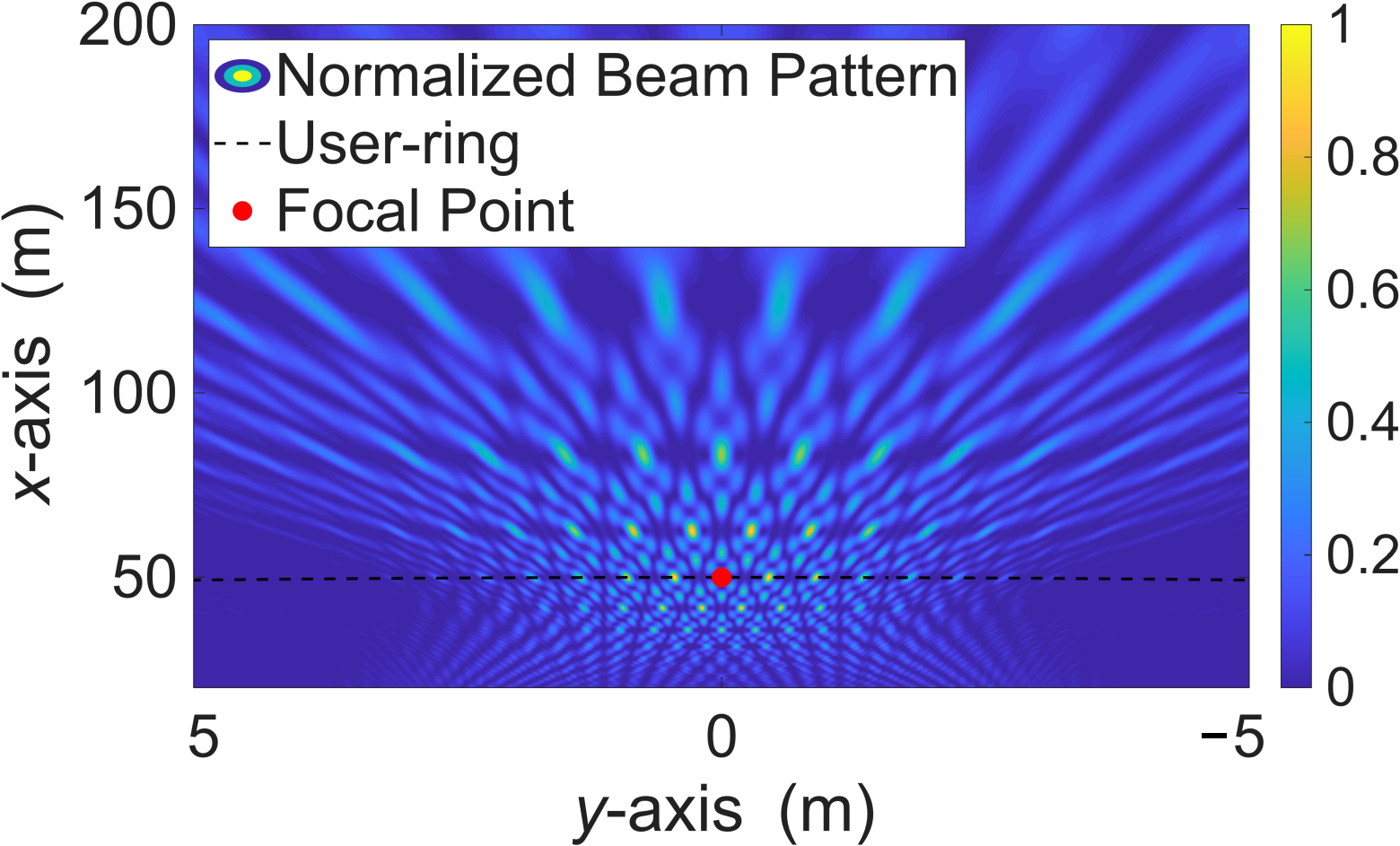}}
  \caption{Beam patterns of the M-ULA when $N=$\ 25, $d_{\mathrm{z}}=$\ 2 m, $r_{\mathrm{F}}=$\ 50 m, $\theta_{\mathrm{F}}=$\ 0 and $f=$\ 15 GHz: (a) $L=$\ 2; (b) $L=$\ 3; (c) $L=$\ 4; (d) $L=$\ 5.}
  \label{fig:beam_pattern}
\end{figure}

\subsection{Analysis of Near-Field Grating Lobe Location}
\label{location}

Unlike conventional far-field arrays, near-field grating lobe locations depend jointly on the angle-domain phase difference $\Delta$ and the distance-domain phase curvature $\phi$. Moreover, as shown in Fig.~\ref{fig:beam_pattern}, the beam pattern depends on whether the number of subarrays $L$ is even or odd, for the parity of $L$ leads to distinct symmetry of the subarray center distribution with respect to the array origin. When $L$ is even, the grating lobe arrangement in the near-field beam pattern exhibits aligned, otherwise interleaved.  

In this subsection, we will analyze the near-field grating lobe locations for these two cases, revealing that near-field grating lobes exist not only in the angle domain, but also in the distance domain, which is consistent with the physical resource in the distance domain within the near-field region. 

\begin{lemma}
For an M-ULA with \textbf{even} number of modules, the near-field grating lobe location can be expressed as 
\begin{align}
\label{eq:location_even}
\left\{ 
    \begin{aligned}
       & \sin\theta_n =\sin\theta_{\mathrm{F}}+n\frac{\lambda}{d_{\mathrm{s}}}, & n\in\mathbb{Z}, \\
       & r_m=\frac{r_{\mathrm{F}}d_{\mathrm{s}}^2\cos^2\theta}{d_{\mathrm{s}}^2\cos^2\theta_{\mathrm{F}} - m\lambda r_{\mathrm{F}}},   & m \in \mathbb{Z},\\
    \end{aligned}
\right.
\end{align}
where $\theta_{\mathrm{F}}$ and $r_{\mathrm{F}}$ is the focal angle and the focal distance, respectively. $\lambda$ is the carrier wavelength and $d_{\mathrm{s}}$ is the inter-module spacing.
\end{lemma}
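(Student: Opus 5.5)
The plan is to work directly with the outer array factor $G_L(r,\theta)$ in (\ref{eq:G_L}). Following the decoupling argued above, the envelope $C_l$ is real-valued and slowly varying, so the grating lobe locations are exactly the points where $|G_L|=1$. That happens precisely when every term $e^{j\kappa Y_l\Delta}e^{j\kappa\frac{Y_l^2}{2}\phi}$, $l\in\mathcal{L}$, has the same phase modulo $2\pi$. I would write $Y_l=(l-\frac{L-1}{2})d_{\mathrm{s}}$ and, since $L$ is even, set $p_l=l-\frac{L-1}{2}$. The $p_l$ are then half-integers $\pm\frac12,\pm\frac32,\dots$, placed symmetrically about the origin. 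The phase of term $l$ is
\begin{equation*}
\psi_l=\kappa d_{\mathrm{s}}p_l\Delta+\kappa\frac{d_{\mathrm{s}}^2}{2}p_l^2\phi .
\end{equation*}
The peak condition is that $\psi_{l}-\psi_{l'}\in 2\pi\mathbb{Z}$ for all pairs $l,l'$.

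Next I would split $\psi_l$ into its linear and quadratic parts using consecutive differences. The difference $\psi_{l+1}-\psi_l$ equals $\kappa d_{\mathrm{s}}\Delta+\kappa\frac{d_{\mathrm{s}}^2}{2}(2p_l+1)\phi$. The second difference is $\kappa d_{\mathrm{s}}^2\phi$, which is the coefficient of the quadratic phase that is independent of $l$. I want sufficient conditions that make each piece a multiple of $2\pi$ separately. Take $\kappa d_{\mathrm{s}}\Delta=2\pi n$, i.e. $\Delta=n\lambda/d_{\mathrm{s}}$; this gives the angle condition $\sin\theta_n=\sin\theta_{\mathrm F}+n\lambda/d_{\mathrm s}$. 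For the quadratic part, even $L$ is what matters. Because $p_l^2-p_{l'}^2=(p_l-p_{l'})(p_l+p_{l'})$ and $p_l+p_{l'}$ is an integer, all $p_l^2$ are congruent to $\frac14$ modulo the integers. So the condition $\kappa\frac{d_{\mathrm{s}}^2}{2}\phi\in 2\pi\mathbb{Z}$ makes the quadratic phases equal modulo $2\pi$. Then every $\psi_l$ is the same constant modulo $2\pi$, and therefore $|G_L|=1$.

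I would then solve $\kappa\frac{d_{\mathrm{s}}^2}{2}\phi=2\pi m$, which gives $\phi=2m\lambda/d_{\mathrm{s}}^2$. Substituting $\phi=\frac{\cos^2\theta_{\mathrm F}}{r_{\mathrm F}}-\frac{\cos^2\theta}{r}$ and solving for $r$ gives a closed form of the stated type, $r_m=\frac{r_{\mathrm F}d_{\mathrm s}^2\cos^2\theta}{d_{\mathrm s}^2\cos^2\theta_{\mathrm F}-c\,m\lambda r_{\mathrm F}}$.

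The main obstacle is pinning down the constant $c$ so that it matches the statement ($c=1$). That requires checking carefully what the minimal period of the quadratic phase is for half-integer $p_l$. The quantity $p_l^2-p_{l'}^2$ is always an integer. Moreover, $(p_l-p_{l'})(p_l+p_{l'})$ has factors of opposite parity, so for two indices of equal offset sign the difference can be even. A finer argument would combine the quadratic phase with the linear phase. In particular, allowing the linear condition to be shifted by half a period when $m$ is odd could halve the required period to $\kappa\frac{d_{\mathrm s}^2}{2}\phi\in\pi\mathbb{Z}$, which is exactly $\phi=m\lambda/d_{\mathrm s}^2$ as in (\ref{eq:location_even}). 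I would therefore first verify directly whether $\phi=m\lambda/d_{\mathrm s}^2$ with $\Delta=n\lambda/d_{\mathrm s}$ makes all $\psi_l$ equal modulo $2\pi$ for even $L$. If the congruence fails for odd $m$, I would interpret $m$ in the lemma as indexing these half-period points, with the relabelling of $n$ absorbed into the set $n\in\mathbb{Z}$. In either case I would conclude by noting that $m=0$ recovers the user-ring, and that $n=m=0$ recovers the focal point.
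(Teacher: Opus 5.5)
\textbf{Gap: the proposal proves only every other distance lobe.} Your co-phasing strategy is the same as the paper's. But as written it establishes only the sub-family $\phi=2m\lambda/d_{\mathrm{s}}^2$. You impose $\kappa\frac{d_{\mathrm{s}}^2}{2}\phi\in 2\pi\mathbb{Z}$, which gives $c=2$, and you leave the step to the stated $c=1$ as an unresolved conditional.

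\textbf{The missing fact.} You nearly wrote it down. For half-integers $p_l,p_{l'}$, the integers $p_l-p_{l'}$ and $p_l+p_{l'}$ sum to the odd integer $2p_l$. So they have opposite parity, and $p_l^2-p_{l'}^2$ is even for \emph{every} pair, not only for indices of equal sign. Two equivalent ways to see it:
\begin{itemize}
\item Writing $p_l=k+\frac12$ gives $p_l^2=k(k+1)+\frac14\equiv\frac14 \pmod{2}$, a congruence modulo $2$ rather than merely modulo the integers.
\item In your own first-difference formula, $2p_l+1$ is an even integer.
\end{itemize}
Hence $\kappa\frac{d_{\mathrm{s}}^2}{2}\phi=\pi m$ with \emph{any} $m\in\mathbb{Z}$ already equalizes the quadratic phases modulo $2\pi$. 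Together with the unshifted condition $\kappa d_{\mathrm{s}}\Delta=2\pi n$, all $\psi_l$ then coincide modulo $2\pi$.

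This is exactly the paper's step. It writes $Y_l=(l+\frac12)d_{\mathrm{s}}$ and factors out $e^{j(\beta/2+\alpha/4)}$, with $\beta=\kappa d_{\mathrm{s}}\Delta$ and $\alpha=\kappa\frac{d_{\mathrm{s}}^2}{2}\phi$. The remaining quadratic exponent is $\alpha\,l(l+1)$ with $l(l+1)$ even, so $\alpha=\pi m$ removes it. Solving $\frac{\pi d_{\mathrm{s}}^2}{\lambda}\phi=\pi m$ then gives $r_m$ exactly as stated.

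\textbf{Why your proposed fix fails.} A half-period shift of the angle condition is the odd-$L$ mechanism: there the $p_l$ are integers with $p_l^2\equiv p_l \pmod{2}$, which produces the interleaved lobes of the odd-$L$ lemma. For even $L$, take $\alpha=\pi m$ and $\kappa d_{\mathrm{s}}\Delta=(2n+1)\pi$.
\begin{itemize}
\item Adjacent terms then differ by an odd multiple of $\pi$, and the sum reduces to $\sum_{l=-\ell}^{\ell-1}(-1)^l=0$.
\item So that point is a null, not a lobe; for $L=2$, $G_L\propto\cos(\beta/2)=0$.
\item Half-integer angle offsets also cannot be absorbed by relabelling $n\in\mathbb{Z}$, so your fallback would contradict the statement.
\end{itemize}
Your second difference $\kappa d_{\mathrm{s}}^2\phi$ already points to the right lattice: requiring it to lie in $2\pi\mathbb{Z}$ is exactly $\kappa\frac{d_{\mathrm{s}}^2}{2}\phi\in\pi\mathbb{Z}$. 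The parity fact above is what closes the argument.
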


\begin{IEEEproof}
We assume $L=2\ell,\ \ell \in \mathbb{Z}^+$. The subarray-center locations can be rewritten as $Y_l=\left(l+\frac{1}{2}\right)d_{\mathrm{s}}$, where $l=-\ell,\cdots,\ell-1$. Substituting $Y_l$ into (\ref{eq:G_L}) yields
\begin{equation}
    \label{eq:G_even}
    G_{L}^{\mathrm{even}}(r,\theta) = \frac{1}{L} \sum_{l=-\ell}^{\ell-1} e^{j\kappa d_\mathrm{s}\Delta (l+\frac{1}{2})} e^{j\frac{\kappa}{2}d_{\mathrm{s}}^2\phi (l^2 + l + \frac{1}{4})}.
\end{equation}

For convenience of subsequent analysis, we define
\begin{equation}
    \label{eq:beta}
    \beta \triangleq \frac{2\pi}{\lambda}d_\text{s}\Delta,
\end{equation}
\begin{equation}
    \label{eq:alpha}
    \alpha \triangleq \frac{\pi}{\lambda}d_\text{s}^2\phi,
\end{equation}
where $\beta$ denotes the linear phase coefficient, representing the linear phase due to angle differences between adjacent modules under the plane-wave approximation, and $\alpha$ denotes the quadratic phase coefficient, representing the quadratic phase accumulation under the spherical-wave model. In the near-field region, $\beta$ and $\alpha$ jointly determine the beam characteristics; in the far-field region, $\alpha$ tends to 0 as $\phi$ tends to 0, leaving only the $\beta$ term. Substituting $\beta$ and $\alpha$ into (\ref{eq:G_even}), we have
\begin{equation}
    \label{eq:G_even_notation}
    G_L^{\mathrm{even}}(r,\theta) = \frac{1}{L} e^{j(\frac{\beta}{2} + \frac{\alpha}{4})} \sum_{l=-\ell}^{\ell-1} e^{j\left(\beta l + \alpha(l^2 + l)\right)}.
\end{equation}
We note that $l(l+1)$ is always even when $L$ is even. Therefore, when $\alpha=\pi m\ (m \in \mathbb{Z})$, we have $e^{j\alpha l(l+1)}=1$ no matter what the exact value of $l$ is, and (\ref{eq:G_even_notation}) can be simplified to $G_L^{\mathrm{even}}(r,\theta) = \frac{1}{L} e^{j(\frac{\beta}{2} + \frac{\alpha}{4})} \sum_{l=-\ell}^{\ell-1} e^{j\beta l}$, which attains its peaks when $\beta=2\pi n\ (n \in \mathbb{Z})$. Based on the definition of $\Delta$, $\phi$, $\beta$ and $\alpha$, the grating lobes occur when
\begin{align}
\left\{ 
    \begin{aligned}
       & \frac{2\pi}{\lambda}d_{\mathrm{s}}\left(\sin\theta-\sin\theta_{\mathrm{F}}\right) = 2\pi n, & n\in\mathbb{Z}, \\
       & \frac{\pi d_{\mathrm{s}}^2}{\lambda}\left( \frac{\cos^2\theta_\textrm{F}}{r_\textrm{F}} - \frac{\cos^2\theta}{r} \right) = \pi m,   & m \in \mathbb{Z},\\
    \end{aligned}
\right.
\end{align}
from which we can derive (\ref{eq:location_even}), the grating lobe locations for M-ULA when $L$ is even. $\hfill \blacksquare$
\end{IEEEproof}

Contrary to the conventional belief, grating lobes exist not only in the angle domain, but also in the distance domain for modular arrays. The enlarged near-field range of modular arrays brings more physical resources in the distance domain corresponding to the quadratic phase term in the near-field model, endowing modular arrays with higher spatial resolution in the distance domain than conventional centralized arrays. From (\ref{eq:location_even}), we can also observe that, when $d_{\mathrm{s}}^2\cos^2\theta_{\mathrm{F}} < m\lambda r_{\mathrm{F}}$, the location of the $m-$th grating lobes in the distance domain is negative, thus the corresponding grating lobes do not exist. Therefore, the larger $d_{\mathrm{s}}$ is, or the closer to zero $\theta_{\mathrm{F}}$ is, the more values of $m$ satisfying $d_{\mathrm{s}}^2\cos^2\theta_{\mathrm{F}} - m\lambda r_{\mathrm{F}} > 0$ there exist, and the more grating lobes in the distance domain there are, as shown in Fig.~\ref{fig:lobe_numbers_ds}. 

\begin{figure}[t!]
  \centering 
  \subfigure[]{
  \includegraphics[width=0.47 \linewidth]{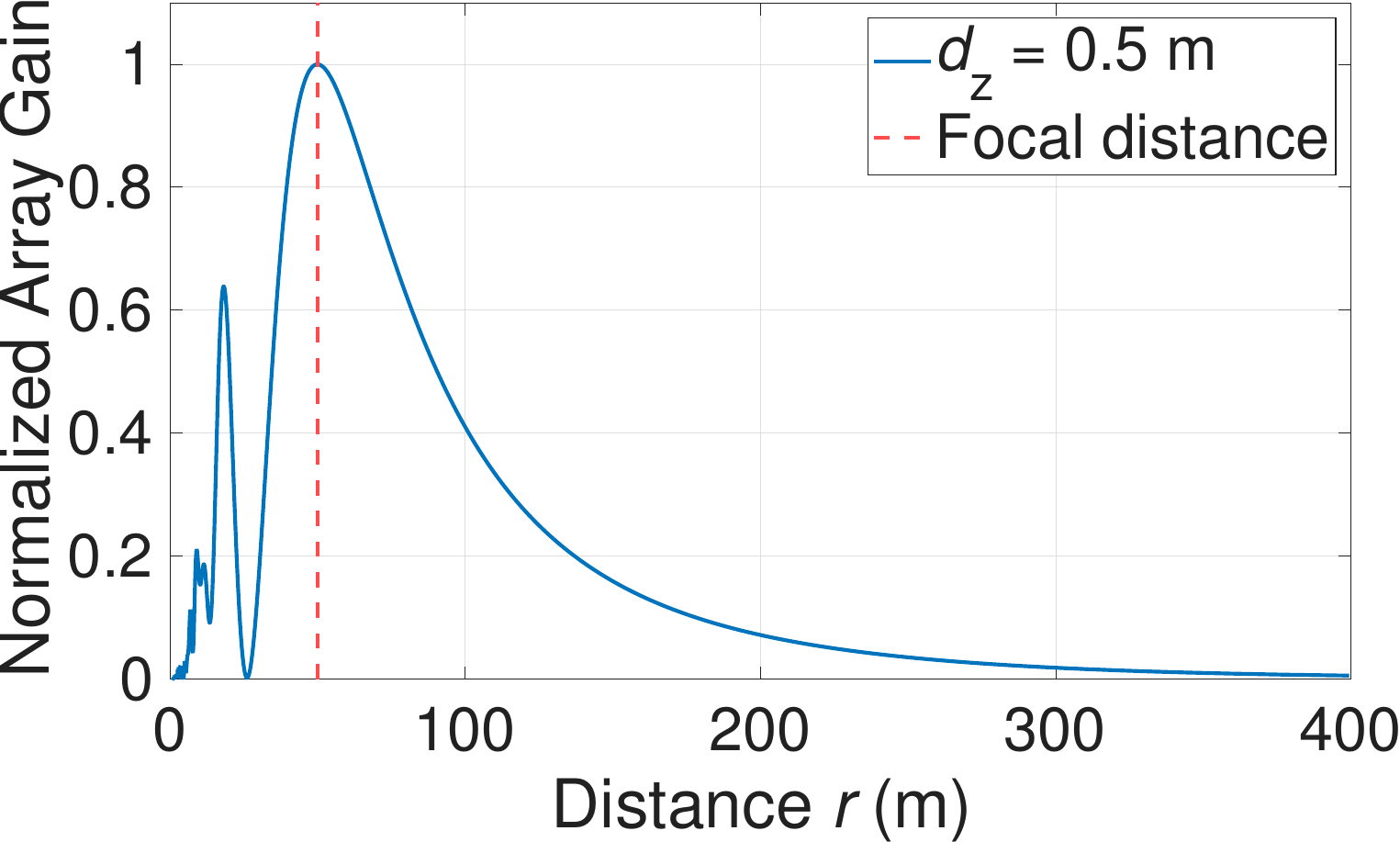}}
  \hfill
  \subfigure[]{
  \includegraphics[width=0.47 \linewidth]{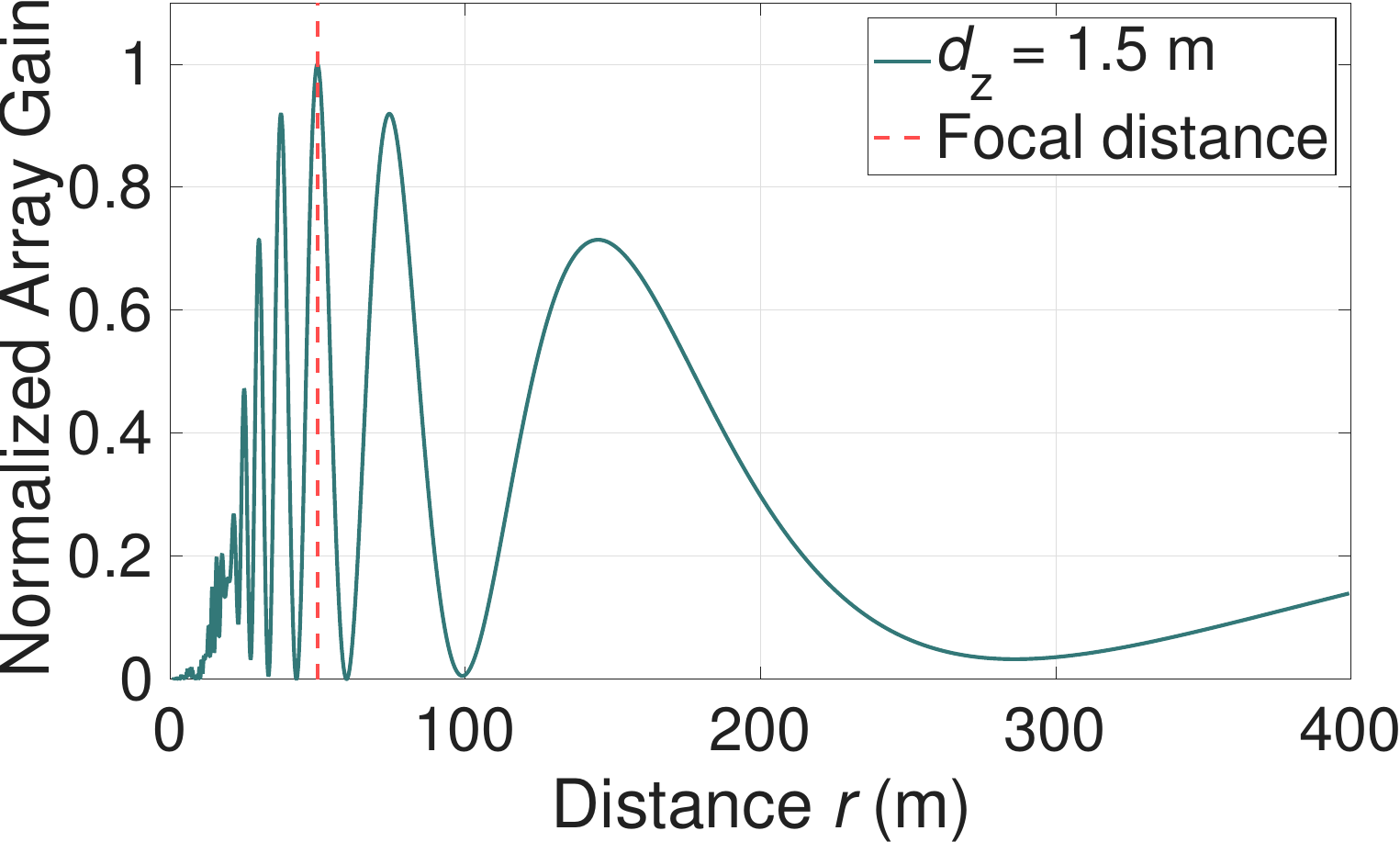}}
  \caption{Normalized array gain along $\theta=$\ 0 when $N=$\ 25, $L=$\ 4, $r_{\mathrm{F}}=$\ 50 m, $\theta_{\mathrm{F}}=$\ 0 and $f=$\ 15 GHz: (a) $d_{\mathrm{z}}=$\ 0.5 m; (b) $d_{\mathrm{z}}=$\ 1.5 m.}
  \label{fig:lobe_numbers_ds}
\end{figure}

Specifically, $L=$\ 2 is a special case of the analysis discussed above. 
 For a fixed observation angle, the variation of array gain with distance comes mainly from the response of the antennas within each subarray. Since the subarray aperture is rather small, the resolution in the distance domain is limited, making the grating lobes in the distance domain seldom observed when $L=$\ 2, as shown in Fig.~\ref{fig:beam_pattern}\ (a). The corresponding mathematical explanation is given in Appendix \ref{distinctiveness_L_2}.
 
 To summarize, when $L$ is an even number, the subarray centers are symmetrically paired on both sides of the origin, with no subarray located exactly at the center. Since the symmetric pairs always appear together, the quadratic phase variations induced by spherical wavefront curvature remain synchronized between the left and right sides, preserving the periodic coherence condition in the angle domain. Consequently, grating lobes in the distance domain occur at integer multiples of the angle period, forming a regularly aligned beam pattern, which is different from the beam pattern when $L$ is an odd number.

\begin{lemma}
For an M-ULA with \textbf{odd} number of modules, the near-field grating lobe location can be expressed as
\begin{align}
\label{eq:location_odd_1}
\left\{ 
    \begin{aligned}
       & \sin\theta_q =\sin\theta_{\mathrm{F}}+q\frac{\lambda}{d_{\mathrm{s}}}, & q\in\mathbb{Z}, \\
       & r_{2p}=\frac{r_{\mathrm{F}}d_{\mathrm{s}}^2\cos^2\theta}{d_{\mathrm{s}}^2\cos^2\theta_{\mathrm{F}} - 2p\lambda r_{\mathrm{F}}},   & p \in \mathbb{Z},\\
    \end{aligned}
\right.
\end{align}

\begin{align}
\label{eq:location_odd_2}
\left\{ 
    \begin{aligned}
       & \sin\theta_q =\sin\theta_{\mathrm{F}} + \frac{2q+1}{2} \frac{\lambda}{d_{\mathrm{s}}}, & q\in\mathbb{Z}, \\
       & r_{2p+1}=\frac{r_{\mathrm{F}}d_{\mathrm{s}}^2\cos^2\theta}{d_{\mathrm{s}}^2\cos^2\theta_{\mathrm{F}} - (2p+1)\lambda r_{\mathrm{F}}},   & p \in \mathbb{Z}.\\
    \end{aligned}
\right.
\end{align}
\end{lemma}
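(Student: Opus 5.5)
We mirror the proof of Lemma~1, with $L=2\ell+1$, $\ell\in\mathbb{Z}^+$. The subarray centers become $Y_l=l d_{\mathrm{s}}$ for $l=-\ell,\cdots,\ell$, so one module sits exactly at the origin. Substituting into (\ref{eq:G_L}) and using the coefficients $\beta$ and $\alpha$ defined in (\ref{eq:beta}) and (\ref{eq:alpha}) gives
\begin{equation*}
G_L^{\mathrm{odd}}(r,\theta)=\frac{1}{L}\sum_{l=-\ell}^{\ell} e^{j\left(\beta l+\alpha l^2\right)} .
\end{equation*}
Unlike the even case, there is no constant prefactor to pull out. The quadratic exponent is now $\alpha l^2$ rather than $\alpha l(l+1)$, and $l^2$ takes both parities. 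The peak conditions must therefore be found by splitting on the parity of $m$ in $\alpha=\pi m$.

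\textbf{Case $\alpha=2\pi p$.} Here $e^{j\alpha l^2}=1$ for every integer $l$, so the sum reduces to the Dirichlet kernel $\frac{1}{L}\sum_l e^{j\beta l}$. Its modulus attains its maximum $1$ exactly when $\beta=2\pi q$. Inverting the definitions of $\beta$ and $\alpha$ gives $\sin\theta=\sin\theta_{\mathrm{F}}+q\lambda/d_{\mathrm{s}}$ and $\frac{d_{\mathrm{s}}^2}{\lambda}\bigl(\frac{\cos^2\theta_{\mathrm{F}}}{r_{\mathrm{F}}}-\frac{\cos^2\theta}{r}\bigr)=2p$. Solving the latter for $r$ yields (\ref{eq:location_odd_1}).

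\textbf{Case $\alpha=(2p+1)\pi$.} Use $l^2\equiv l \pmod 2$, so that $e^{j(2p+1)\pi l^2}=e^{j\pi l}$ for all integers $l$. The sum then becomes $\frac{1}{L}\sum_l e^{j(\beta+\pi)l}$. This is again a Dirichlet kernel, now shifted by $\pi$, and it attains unit modulus when $\beta+\pi=2\pi q'$, i.e.\ $\beta=(2q+1)\pi$. Translating back gives the half-period angular offset $\sin\theta=\sin\theta_{\mathrm{F}}+\frac{2q+1}{2}\frac{\lambda}{d_{\mathrm{s}}}$ together with the distance condition with $m=2p+1$. This is exactly (\ref{eq:location_odd_2}), the interleaved pattern.

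The main point requiring care is the second case. One has to notice that the parity identity $l^2\equiv l$ turns the odd-$m$ quadratic phase into a linear phase $\pi l$, and that this shifts the angular peaks by half a period instead of destroying them. This identity is precisely what fails to produce an offset in the even case: there $l(l+1)$ is always even, so both parities of $m$ give aligned lobes.

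A secondary remark is also worth recording. These $(\alpha,\beta)$ pairs are the only points where $|G_L|=1$, since attaining $1$ forces all phases $\beta l+\alpha l^2$ to agree modulo $2\pi$. Taking differences in $l$ then forces $\alpha\in\pi\mathbb{Z}$ and fixes $\beta$ accordingly. This justifies calling these points the complete set of grating lobe locations, subject to the envelope $C_l$ as in the even case.
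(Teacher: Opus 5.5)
Your proof is correct and follows the paper's argument essentially step for step: substitute $Y_l = l d_{\mathrm{s}}$, split on $\alpha = 2\pi p$ versus $\alpha = (2p+1)\pi$, and use $l^2 \equiv l \pmod 2$ to turn the odd case into a Dirichlet kernel shifted by $\pi$. Your closing remark, that these are the only points with $|G_L| = 1$, is a correct converse that the paper does not state; it needs $\ell \ge 1$, so your choice $\ell \in \mathbb{Z}^+$ is appropriate there, whereas the paper's $\ell \in \mathbb{N}$ also admits the degenerate case $L = 1$.
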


\begin{IEEEproof}
We assume $L=2\ell+1,\ \ell \in \mathbb{N}$. Then the subarray-center locations can be rewritten as $Y_l=ld_{\mathrm{s}}$, where $l=-\ell,\cdots,\ell$. Under the definition of $\beta$ and $\alpha$ in (\ref{eq:beta}) and (\ref{eq:alpha}), (\ref{eq:G_L}) can be expressed as
 \begin{equation}
     \label{eq:G_odd_notation}
     G_{L}^{\mathrm{odd}}(r,\theta) = \frac{1}{L} \sum_{l=-\ell}^{\ell} e^{j\left(\beta l+\alpha l^2 \right)}.
 \end{equation}
 We note that $l^2$ and $l$ are either both odd or both even, i.e.,$l^2 \equiv l \pmod{2}$. As the integer factor for $\alpha$, $l^2$ is no longer always an even number. Therefore, the condition that $\beta$ and $\alpha$ need to satisfy to form a grating lobe changes as $l$ varies.
 
 When $\alpha = 2\pi p\ (p\in\mathbb{Z})$, we always have $e^{j\alpha l^2}=1$, and (\ref{eq:G_odd_notation}) reduces to a standard Dirichlet kernel:
  \begin{equation}
     G_{L}^{\mathrm{odd}}(r,\theta) = \frac{1}{L} \sum_{l=-\ell}^{\ell} e^{j\beta l},
 \end{equation}
 which attains its peak when $\beta = 2\pi q\ (q\in\mathbb{Z})$. Similar to the analysis when $L$ is even, we can obtain (\ref{eq:location_odd_1}).

When $\alpha = (2p+1)\pi\ (p\in\mathbb{Z})$, we have $e^{j\alpha l^2} = (-1)^{l^2} = (-1)^{l} = e^{j\pi l}$, and (\ref{eq:G_odd_notation}) can be simplified as
\begin{equation}
    G_{L}^{\mathrm{odd}}(r,\theta) = \frac{1}{L} \sum_{l=-\ell}^{\ell} e^{j\left(\beta+\pi \right) l},
\end{equation}
which attains its peak when $\beta + \pi = 2\pi q$, i.e., $\beta = (2q+1)\pi$, $q\in\mathbb{Z}$. In this case, the corresponding locations of grating lobes are (\ref{eq:location_odd_2}). $\hfill \blacksquare$
\end{IEEEproof}


Through above analysis, the reason why the near-field beam patterns exhibit interleaved grating lobe arrangements when $L$ is odd can be explained. When $L$ is odd, in addition to the symmetric pairs, one subarray is situated precisely at the array center, acting as a fixed phase reference as the observation distance varies. At certain values of the spherical wavefront curvature, an extra $\pi$ phase difference arises between the symmetric pairs and the center subarray, making the odd-order distance grating lobes fall at half-integer angle locations.

Particularly for $L=$\ 1, there are always $\ell=0$ and $l=0$, making each summation term in (\ref{eq:G_odd_notation}) equal to 1. In this case, there are no inter-subarray interferences, and no high-gain grating lobes will be generated due to coherent interferences. Therefore, near-field grating lobes in the distance domain are rarely observed in conventional centralized arrays.

As can be seen from (\ref{eq:location_even}), (\ref{eq:location_odd_1}), and (\ref{eq:location_odd_2}), once $\lambda$, $d_{\mathrm{s}}$, $r_{\mathrm{F}}$ and $\theta_{\mathrm{F}}$ are determined, the inherent grating lobe locations are independent of the specific value of $L$. 
The specific value of $L$ only affects the array gain peak and the beamfocusing spatial extent, which is shown in Fig.~\ref{fig:beam_pattern}\ (b) and Fig.~\ref{fig:beam_pattern}\ (d). 
As $L$ increases, $|Y_l|$, the distance from the outer subarray to the array center, increases, leading to a more severe attenuation of the amplitude envelope $C_l$ in the non-focusing direction. Since the total array gain is the product of the in-phase term between subarrays and the amplitude envelope $C_l$, the array gain peak of the grating lobes is suppressed. Therefore, the larger $L$ is, the fewer near-field grating lobes with higher array gain peaks there are, and the more the near-field grating lobes are suppressed.

In this subsection, we reveal that the near-field grating lobes of modular arrays not only exist in the angular domain, but also in the distance domain. The reason why the parity of $L$ affects the arrangement of the beam pattern is also explained. 

\subsection{Analysis of Near-Field Beamfocusing Spatial Extent}

To better characterize the beam pattern of an M-ULA, we analyze the near-field beamfocusing spatial extent in this subsection, using the classic definition of half-power beamwidth (BW) and beamdepth (BD).

\subsubsection{\textbf{Analysis of Beamwidth}}
\label{BW}

For the main focusing region, the 3-dB beamwidth refers to the half-power spatial angle along the user-ring, where there are a set of angle-distance pairs with a nearly constant beamwidth. Suppose that the spatial angles for two half-power locations are $\theta_{\textrm{small}}$ and $\theta_{\textrm{large}}$ respectively, then $\mathrm{BW}=\left| \sin\theta_{\textrm{large}} - \sin\theta_{\textrm{small}} \right|$. Under the classic definition of beamwidth mentioned above, we consider the user-ring condition $\phi=0$, where the quadratic phase term in (\ref{eq:G_MRT}) vanishes, and (\ref{eq:G_MRT}) could be decoupled into two Dirichlet kernels only related to the angle difference $\Delta$:
\begin{equation}
  \label{eq:G_BW}
  G(\theta;\theta_\text{F}) = \left|G_{N}(\theta)\right|^2 \cdot \left|G_{L}(\theta)\right|^2,
\end{equation}
where $G_N(\theta) = \frac{1}{N}\sum_{n=0}^{N-1} e^{j\kappa y_n\Delta}$ is the Dirichlet kernel contributed by each module, and $G_L(\theta) = \frac{1}{L}\sum_{l=0}^{L-1} e^{j\kappa Y_l\Delta}$ is the Dirichlet kernel contributed by the inter-module structure. When $\theta$ is close to $\theta_{\mathrm{F}}$, we substitute $y_{n} = (n-\frac{N-1}{2})d$ and $Y_{l} = (l-\frac{L-1}{2})d_{\mathrm{s}}$ into (\ref{eq:G_MRT}), and obtain
\begin{equation}
  \label{eq:G_BW_approx}
  G(\theta;\theta_\text{F}) \approx \left| \operatorname{sinc}\left( \frac{N\Delta}{2} \right) \right|^2 \cdot \left| \operatorname{sinc}\left( \frac{Ld_{\mathrm{s}}\Delta}{\lambda} \right) \right|^2,
\end{equation}
where $\operatorname{sinc}(x) = \frac{\sin(\pi x)}{\pi x}$. 
Since the aperture of a single module typically differs from that of the overall array by more than a factor of ten, $ \left|G_{N}(\theta)\right|^2$ varies rather slowly and acts as a wide envelope. Within the narrow angular region of the main lobe, $ \left|G_{N}(\theta)\right|^2$ remains approximately constant. Consequently, the 3-dB condition could be simplified as $\left|G_{L}(\theta_\mathrm{3dB})\right|^2 =$\ 0.5. Solving $\left| \operatorname{sinc}\left( \frac{Ld_{\mathrm{s}}\Delta}{\lambda} \right) \right|^2=$\ 0.5 yields $\frac{Ld_{\mathrm{s}}\Delta}{\lambda}=$\ 0.443. Assuming $\theta_{\mathrm{F}}=$\ 0, the half-power beamwidth for main focusing region is given by
\begin{equation}
  \label{eq:BW}
  \mathrm{BW} = \frac{0.886\lambda}{Ld_{\mathrm{s}}}.
\end{equation}

\subsubsection{\textbf{Analysis of Beamdepth}}
\label{BD}


\begin{figure}[t!]
	\centering
	\includegraphics[width=0.38\textwidth]{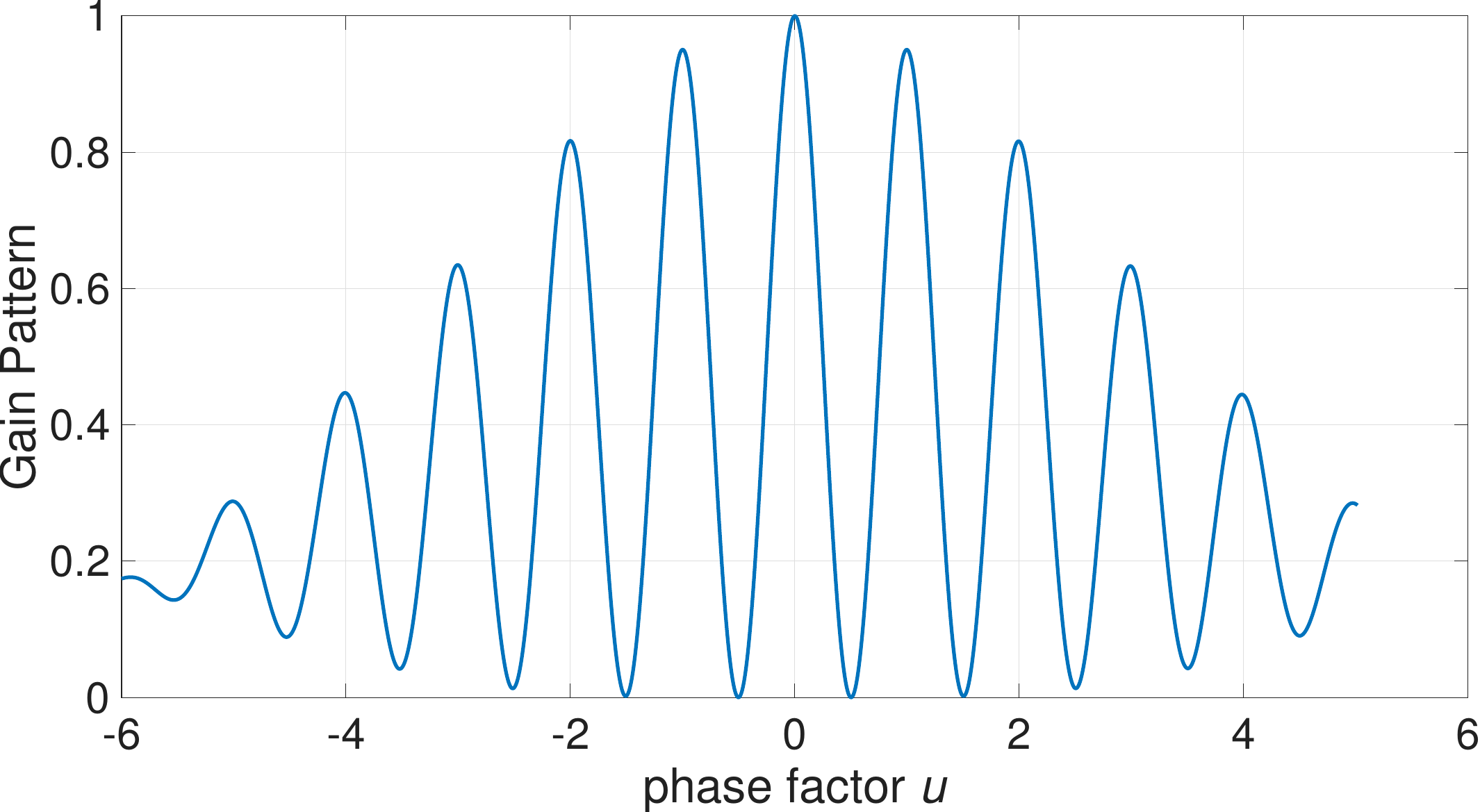}
	\caption{Gain pattern along $\theta=$\ 0 for $L=$\ 4, $N=$\ 25 and $d_{\mathrm{z}}=$\ 2 m.}
	\label{fig:G_u}
\end{figure}

\begin{figure}[t!]
  \centering 
  \subfigure[]{
  \includegraphics[width=0.47 \linewidth]{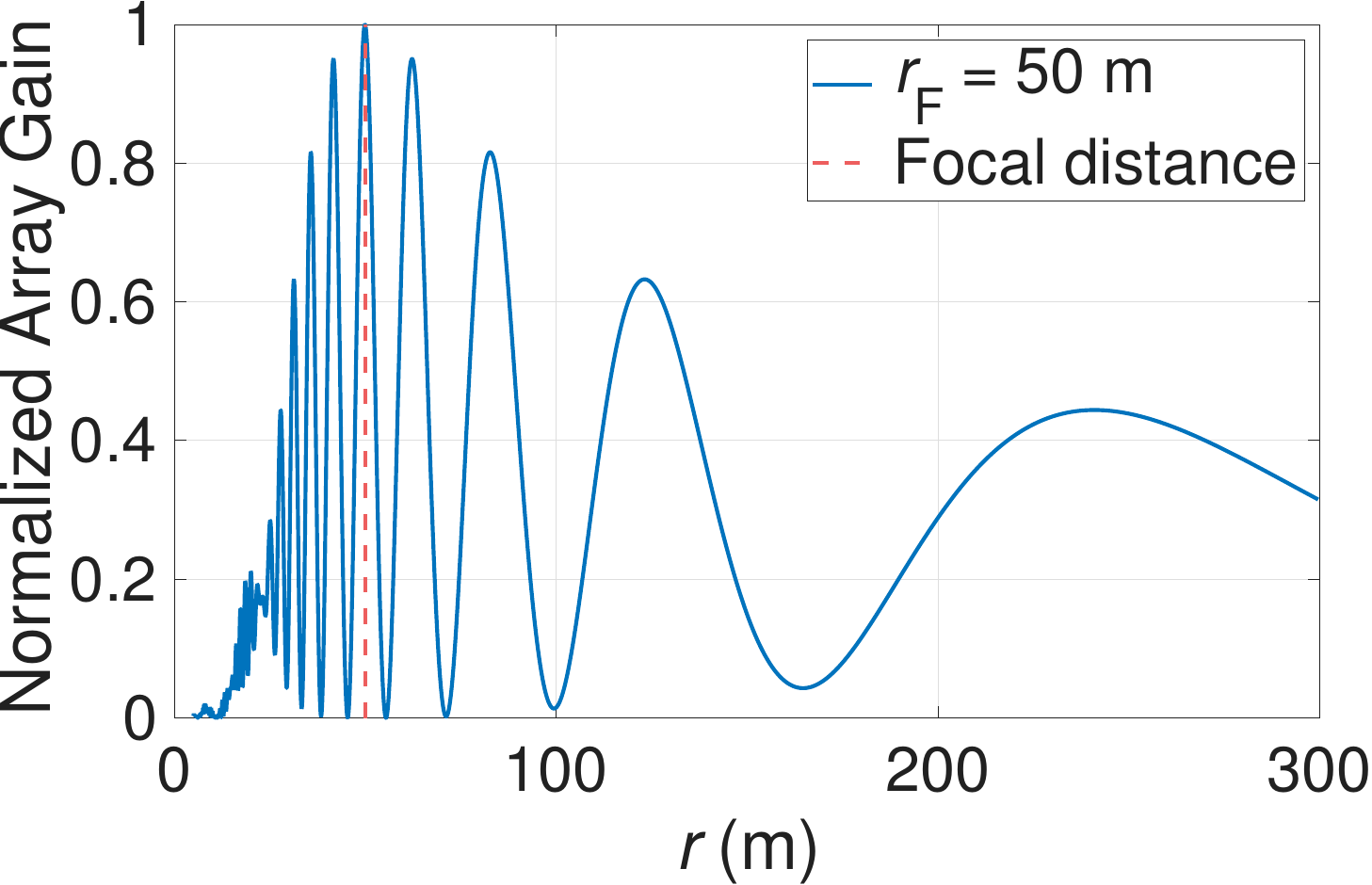}}
  \hfill
  \subfigure[]{
  \includegraphics[width=0.47 \linewidth]{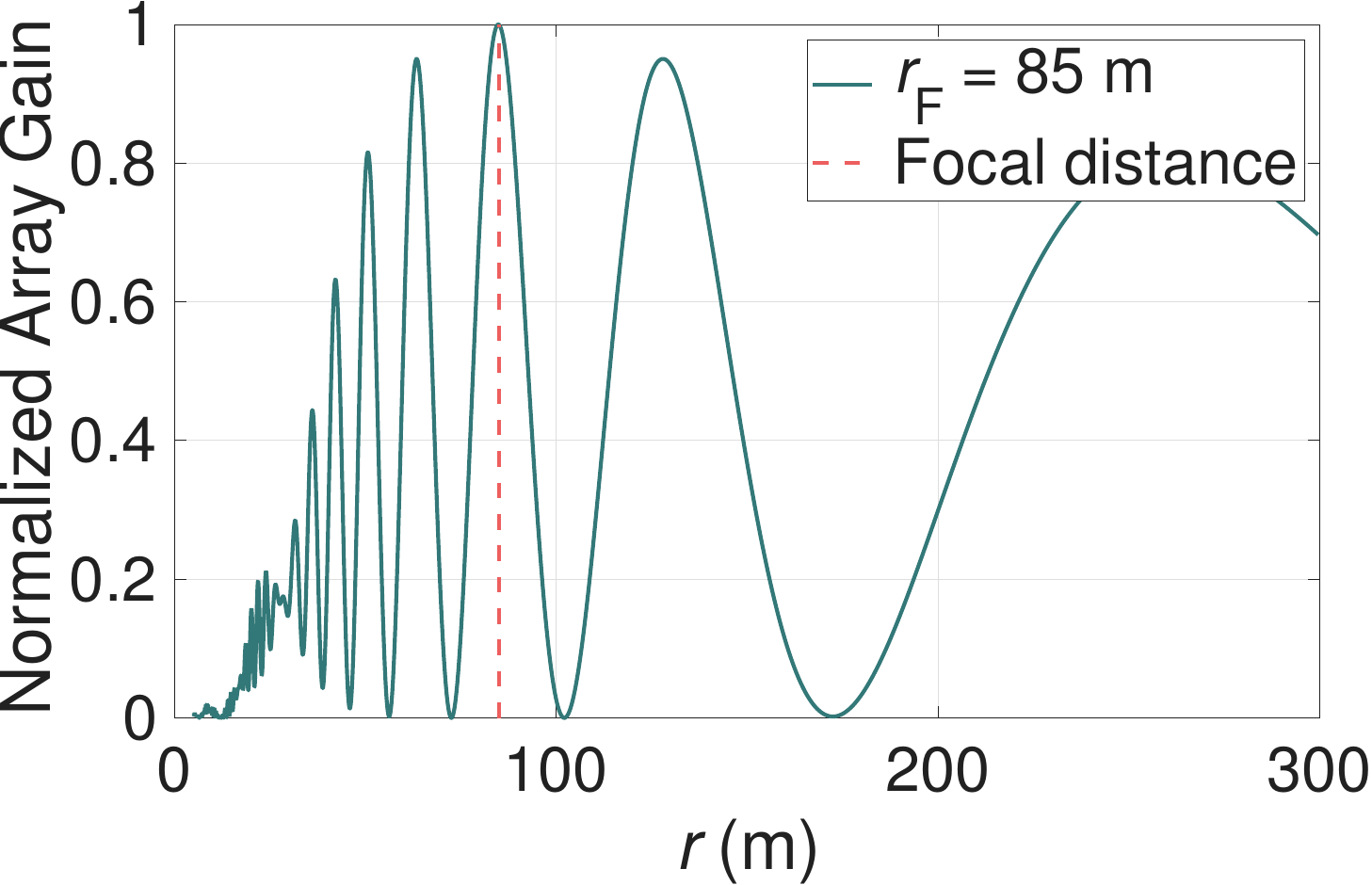}}
  \caption{Normalized array gain along $\theta=$\ 0 under the same array structure with different $r_{\mathrm{F}}$: (a) $r_{\mathrm{F}}=$\ 50 m; (b) $r_{\mathrm{F}}=$\ 85 m.}
  \label{fig:gain_pattern}
\end{figure}

For the main focusing region, the 3-dB beamdepth refers to the half-power spatial distance along the $\theta = \theta_{\mathrm{F}}$ line in the distance domain. We assume that the spatial distances for two half-power locations are $r_{\textrm{small}}$ and $r_{\textrm{large}}$ respectively, then $\mathrm{BD}=\left| r_{\textrm{large}} - r_{\textrm{small}} \right|$. Different from the beamwidth, even for the same array configuration, the beamdepth varies with the focal distance $r_{\mathrm{F}}$. To enable a unified analysis of beamdepth for different $r_{\mathrm{F}}$, we introduce the concept of \textit{gain pattern} $G(u)$, treating the normalized array gain in (\ref{eq:G_MRT}) as a function of the dimensionless quantity
\begin{equation}
  \label{eq:u}
  u \triangleq \phi \frac{d_{\mathrm{s}}^2}{\lambda} = \frac{d_{\mathrm{s}}^2}{\lambda}\left( \frac{\cos^2\theta_\textrm{F}}{r_\textrm{F}} - \frac{\cos^2\theta}{r} \right),
\end{equation}
which represents the quadratic phase factor. Under the definition in (\ref{eq:alpha}), we have $u = \frac{\alpha}{\pi}$. Based on the analysis above, integer values of $u$ correspond to the grating lobe locations, so that the same array structure with different focal distances $r_{\mathrm{F}}$ share the same gain pattern $G(u)$. In other words, as $r_{\mathrm{F}}$ changes, grating lobes of the same order attain the same peak gain, with only their exact locations varying. 


Fig.~\ref{fig:G_u} shows the gain pattern for a fixed array structure, and Fig.~\ref{fig:gain_pattern} illustrates the impact of $r_{\mathrm{F}}$ on the grating lobe location and beamdepth. As can be seen, the gain pattern is determined by the array configuration, and different values of $r_{\mathrm{F}}$ only cause stretching transformations to $G(u)$ to varying degrees. Therefore, under a certain array structure, as long as the values of $u_\mathrm{3dB}^\mathrm{small}$ and $u_\mathrm{3dB}^\mathrm{large}$ are determined, we can obtain the corresponding $r_\mathrm{3dB}^\mathrm{small}$ and $r_\mathrm{3dB}^\mathrm{large}$ according to (\ref{eq:u}):
\begin{equation}
    \label{eq:r_3dB}
    r_\mathrm{3dB}^\mathrm{large/small}=\frac{r_{\mathrm{F}}d_{\mathrm{s}}^2\cos^2\theta}{d_{\mathrm{s}}^2\cos^2\theta_{\mathrm{F}} - u_\mathrm{3dB}^\mathrm{large/small}\lambda r_{\mathrm{F}}},
\end{equation}
based on which we can calculate the beamdepth for each high-gain region as (\ref{eq:BD}). For the main focusing region, we have $u_\mathrm{3dB}^\mathrm{small} = -u_\mathrm{3dB}^\mathrm{large}$, so that (\ref{eq:BD}) could be simplified as
\begin{figure*}[!t]
    \begin{equation}
    \label{eq:BD}
    \mathrm{BD} = r_\mathrm{3dB}^\mathrm{large} - r_\mathrm{3dB}^\mathrm{small}
    =\frac{\lambda r_{\mathrm{F}}^2 d_{\mathrm{s}}^2\cos^2\theta \left(u_\mathrm{3dB}^\mathrm{large} - u_\mathrm{3dB}^\mathrm{small}\right)}{d_{\mathrm{s}}^4\cos^4\theta_{\mathrm{F}} - \lambda r_{\mathrm{F}}d_{\mathrm{s}}^2\cos^2\theta_{\mathrm{F}}\left( u_\mathrm{3dB}^\mathrm{large} + u_\mathrm{3dB}^\mathrm{small} \right) + \lambda^2 r_{\mathrm{F}}^2 u_\mathrm{3dB}^\mathrm{large} u_\mathrm{3dB}^\mathrm{small} }. 
    \end{equation}
    \hrulefill
\end{figure*}
\begin{equation}
    \label{eq:BD_main}
    \mathrm{BD}_\mathrm{main} =\frac{2\lambda r_{\mathrm{F}}^2 d_{\mathrm{s}}^2\cos^2\theta \ u_\mathrm{3dB}^\mathrm{large}} {d_{\mathrm{s}}^4\cos^4\theta_{\mathrm{F}}  - \lambda^2 r_{\mathrm{F}}^2 (u_\mathrm{3dB}^\mathrm{large})^2 }.
\end{equation}
Notably, for the same array structure with different focal distances $r_{\mathrm{F,a}}$ and $r_{\mathrm{F,b}}$, the ratio of the beamdepths of their main focusing regions could be expressed as 
\begin{equation}
  \label{eq:BD_ratio}
  \frac{\mathrm{BD}_\mathrm{main,a}}{\mathrm{BD}_\mathrm{main,b}} = \left(\frac{r_{\mathrm{F,a}}}{r_{\mathrm{F,b}}}\right)^2 \frac{d_{\mathrm{s}}^4\cos^4\theta_{\mathrm{F}}  - \lambda^2 r_{\mathrm{F,b}}^2 (u_\mathrm{3dB}^\mathrm{large})^2}{d_{\mathrm{s}}^4\cos^4\theta_{\mathrm{F}}  - \lambda^2 r_{\mathrm{F,a}}^2 (u_\mathrm{3dB}^\mathrm{large})^2}.
\end{equation}
Due to the enlarged module spacing, modular arrays are likely to satisfy $\left(d_{\mathrm{s}}\cos\theta_\mathrm{F}\right)^2 \gg \lambda r_{\mathrm{F}}u_\mathrm{3dB}^\mathrm{large}$. In this case, the latter item could be approximated as \text{1} and we get
\begin{equation}
  \label{eq:BD_ratio_approx}
  \frac{\mathrm{BD}_\mathrm{main,a}}{\mathrm{BD}_\mathrm{main,b}} \approx \left(\frac{r_{\mathrm{F,a}}}{r_{\mathrm{F,b}}}\right)^2,
\end{equation}
from which we can see that, the beamdepth of the main focusing region is approximately proportional to $r_{\mathrm{F}}^2$. That's exactly why the mapping of the gain pattern $G(u)$ to the distance domain broadens significantly as the focal distance $r_{\mathrm{F}}$ increases. Analogous to the concept of \textit{depth of field} for optical lenses, the antenna array acts as a phased-array lens, focusing in the distance dimension by controlling the curvature of spherical waves. The larger the array aperture is, the farther the focal distance is enabled, making the focal depth longer, i.e., the beamdepth larger.

Besides, when $\left(d_{\mathrm{s}}\cos\theta_\mathrm{F}\right)^2 \gg \lambda r_{\mathrm{F}}u_\mathrm{3dB}^\mathrm{large}$ is satisfied, we can also observe from (\ref{eq:BD_main}) that when $r_{\mathrm{F}}$ remains the same, the beamdepth of the main focusing region is inversely proportional to $d_{\mathrm{s}}^2$, for increasing the module spacing will enlarge the effective array aperture, making the path differences between modules more sensitive to distance changes, thereby improving the distance resolution.  Combined with the discussion in Section~\ref{location}, we have found that enlarging the module spacing $d_\mathrm{s}$ will, on one hand, increase the number of observable near-field grating lobes, generating more interference regions with high array gain, and on the other hand, reduce the beamdepth of each grating lobe, enhancing the accuracy of near-field beamfocusing. Interestingly, the benefits and drawbacks coexist as $d_{\mathrm{s}}$ increases, so that there is likely to be an inflection point regarding $d_{\mathrm{s}}$ where the effects of the two can be optimally balanced, as shown in Section~\ref{sim}.


In this section, we have analyzed the beam pattern for M-ULA. we will further study how to suppress near-field grating lobes for modular arrays in the following section.

\section{Near-Field Grating Lobe Suppression for Modular Array}



In this section, we first analyze how to suppress near-field grating lobes for M-ULA. The results demonstrate that, increasing the number of antennas per
module works, which also means that the grating lobe interferences are severe under a limited number of antennas. To break this limitation, we further propose the modular minimum-redundancy array (M-MRA) by redesigning the subarray configuration, which can suppress near-field grating lobes in both the angle and distance domains simultaneously with the same number of antennas.

\subsection{Single-region Beamfocusing for M-ULA}

In Section~\ref{section_analysis}, we have explained that for M-ULA, the slowly-varying envelope mainly depends on the structure of each single module, while the array factor determined by the arrangement of all modules varies rapidly. Naturally, if the first grating lobes of the rapidly-varying term are located outside the main lobe of the envelope, these grating lobes will be suppressed by the envelope, preserving only a main focusing region in the overall array gain pattern. We will analyze this single-region beamfocusing condition in this subsection.

\begin{figure}[!b]
  \centering 
  \subfigure[]{
  \includegraphics[width=0.48 \linewidth]{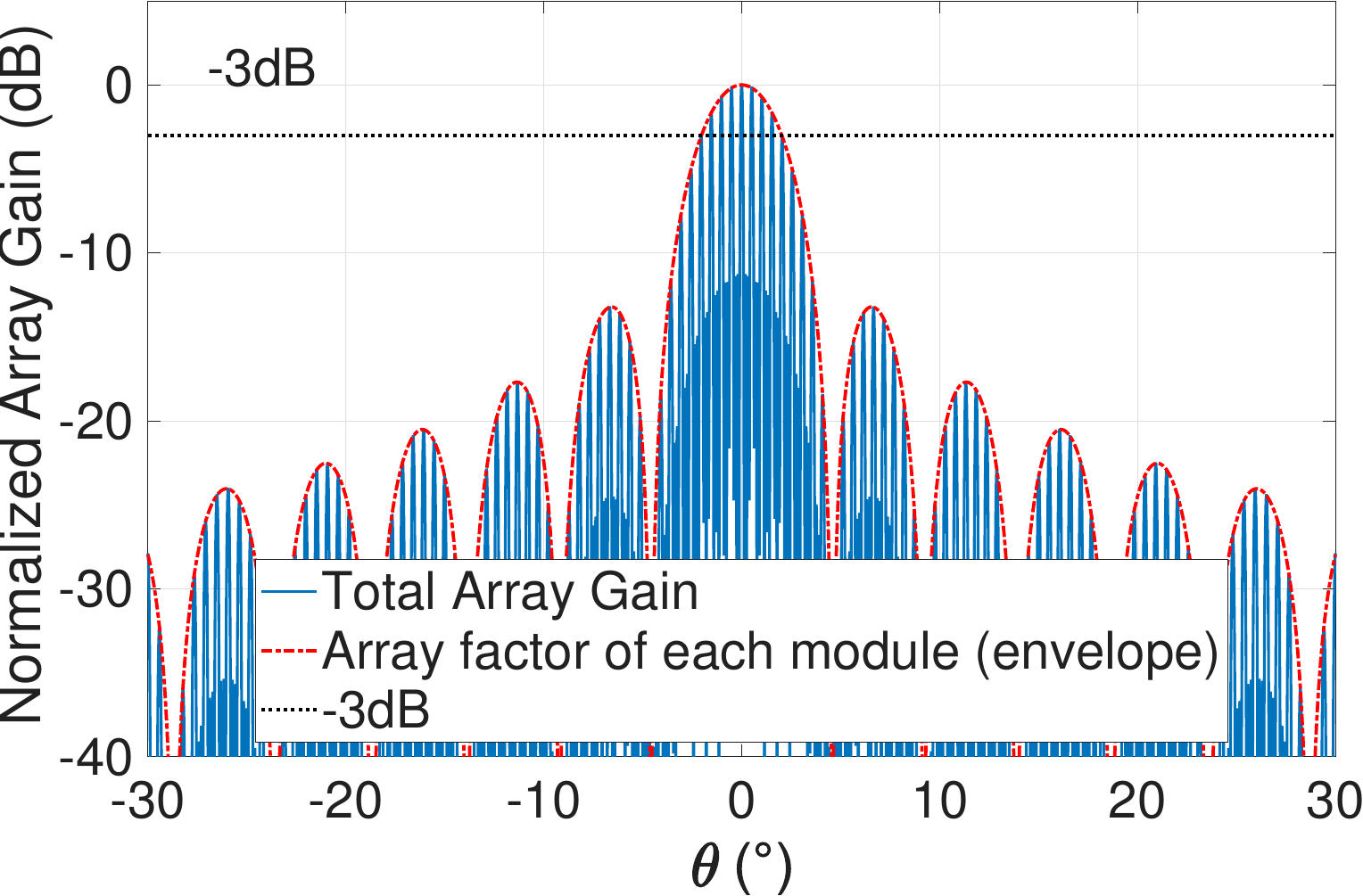}}
  \hfill
  \subfigure[]{
  \includegraphics[width=0.48 \linewidth]{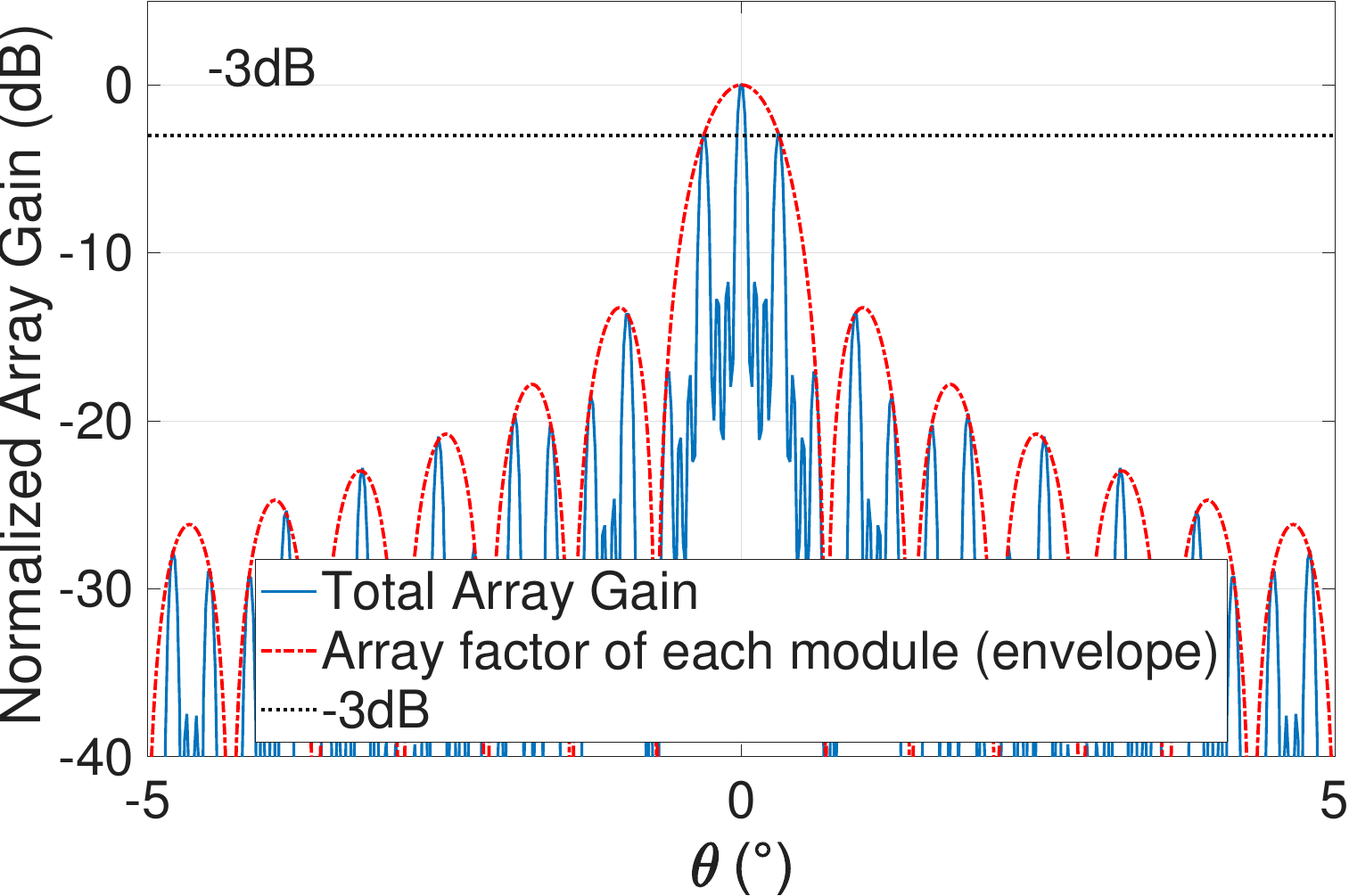}}
  \newline
  \subfigure[]{
  \includegraphics[width=0.475 \linewidth]{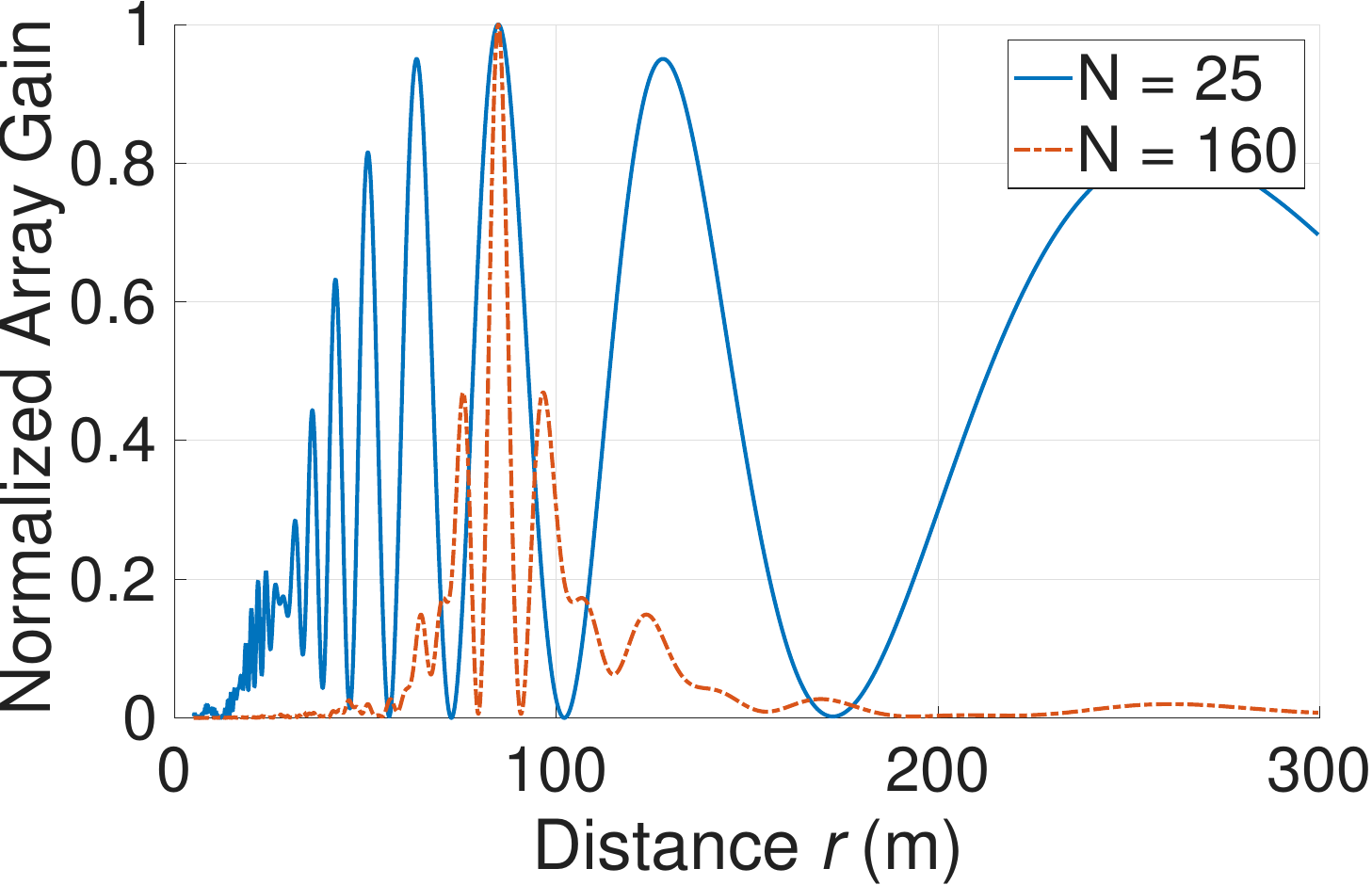}}
  \hfill
  \subfigure[]{
  \includegraphics[width=0.48 \linewidth]{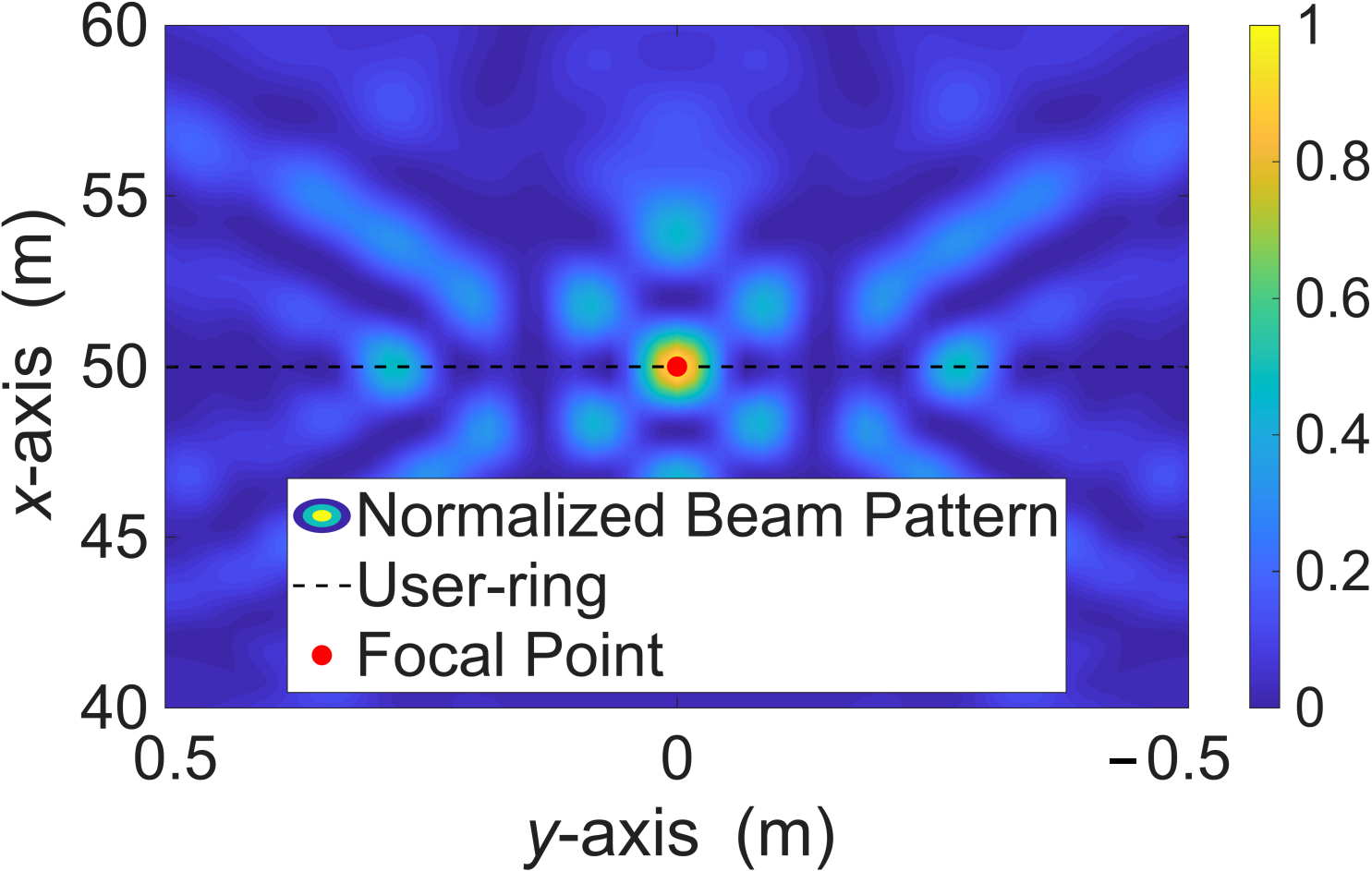}}
  \caption{Performance of increasing $N$ to suppress grating lobes for M-ULA when $L=$\ 4, $d_{\mathrm{z}}=$\ 2 m, $r_{\mathrm{F}}=$\ 50 m, $\theta_{\mathrm{F}}=$\ 0 and $f=$\ 15 GHz: (a) In the angle domain, $N=$\ 25; (b) In the angle domain, $N=$\ 160; (c) In the distance domain; (d) The beam pattern when $N=$\ 160.}
  \label{fig:ULA_suppression}
\end{figure}

We first consider the grating lobe suppression in the angle domain. As shown in (\ref{eq:G_BW}), the beam pattern in the user-ring could be decoupled into two Dirichlet kernels, with $\left|G_{N}(\theta)\right|^2$ representing the contribution of each single module, and $\left|G_{L}(\theta)\right|^2$ the arrangement of all modules. We notice that $G_N(\theta) = \frac{1}{N}\sum_{n=0}^{N-1} e^{j\kappa y_n\Delta}$ could be rewritten as $\operatorname{sinc}\left( \frac{N\Delta}{2} \right)$, where $\operatorname{sinc}(x) = \frac{\sin(\pi x)}{\pi x}$. Solving $\left| \operatorname{sinc}\left( \frac{N\Delta}{2} \right) \right|^2=$\ 0.5 yields $\frac{N\Delta}{2}=$\ 0.443. When $\theta_{\mathrm{F}}=$\ 0, the half-power beamwidth for the envelope is given by 
\begin{equation}
  \label{eq:BW_enve}
  \mathrm{BW}_\mathrm{enve} = \frac{1.772}{N}.
\end{equation}
According to (\ref{eq:location_even}) and (\ref{eq:location_odd_1}), in the user-ring, whether the number of modules $L$ is even or odd, the angle of the first grating lobe is always $\sin \theta_\mathrm{1} = \frac{\lambda}{d_{\mathrm{s}}}$. Based on the discussion above, the single-region beamfocusing condition could be expressed as
\begin{equation}
  \label{eq:beamfocusing_condition}
  \sin\theta_\mathrm{1} \geq \mathrm{BW}_\mathrm{enve}.
\end{equation}
By substituting $d_{\mathrm{s}}=d_{\mathrm{z}}+(N-1)d$, the single-region beamfocusing condition could be simplified as
\begin{equation}
  \label{eq:beamfocusing_condition_final}
  N \geq 0.795\left( \frac{d_{\mathrm{z}}}{d}-1 \right),
\end{equation}
where $d = \frac{\lambda}{2}$. (\ref{eq:beamfocusing_condition_final}) reveals the constraint between the number of antennas within each module $N$ and the spacing between two nearest antennas among two adjacent modules $d_\mathrm{z}$. To achieve single-region beamfocusing, the arrangement of modules cannot be too sparse. When $d_\mathrm{z}$ is large, i.e., when adjacent modules are far apart, the number of antennas within each module $N$ must increase correspondingly to make the main lobe of the envelope sufficiently narrow. 



Increasing $N$ will not only narrow the envelope in the angle domain, but also suppress the near-field grating lobes in the distance domain, as shown in Fig.~\ref{fig:ULA_suppression} (a), (b) and (c), respectively. According to (\ref{eq:G_C_l}), $C_l$ is a Dirichlet kernel in the form of a \textit{sinc} function, where the variable is $\frac{N(\Delta+Y_l\phi)}{2}$. Increasing $N$ will narrow the main lobe of this kernel function and make $C_l$ more sensitive to the phase shift brought by $Y_l\phi$, resulting in a rapid attenuation of array gain in the distance domain. Therefore, enlarging $N$ can suppress near-field grating lobes in both the angle and distance domains, as shown in Fig.~\ref{fig:ULA_suppression} (d).


The analysis above is based on $\theta_\mathrm{F}=$\ 0 and the assumption that $\Delta$ is small enough for approximation. In practical engineering, $\theta_\mathrm{F}=$\ 0 is typically used as a benchmark for design. For non-zero $\theta_\mathrm{F}$, the only difference is that we should consider $\Delta = \sin \theta - \sin \theta_\mathrm{F}$ and the corresponding first grating lobe location. The single-region beamfocusing condition in (\ref{eq:beamfocusing_condition}) remains the same.


The analysis in this subsection indicates that for M-ULA, simply increasing module spacing to expand the array aperture could not achieve the expected near-field beamfocusing precisely enough. To obtain a narrow envelope, the number of antennas per module should increase linearly with the module spacing, otherwise the grating lobe interferences will be severe. In the next subsection, an array structure named modular minimum-redundancy array (M-MRA) is proposed through redesigning the subarray configuration, which can suppress near-field grating lobes in both the angle and distance domains without increasing the number of antennas.

\subsection{Proposed M-MRA}

\begin{figure}[t!]
  \centering 
  \subfigure[]{
  \includegraphics[width=0.46 \linewidth]{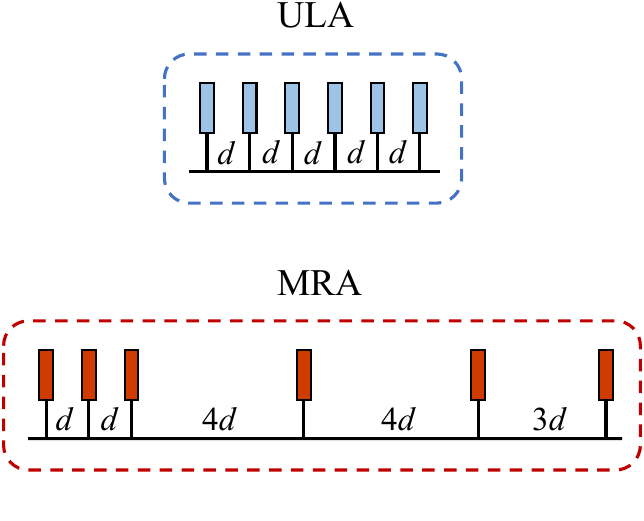}}
  \hfill
  \subfigure[]{
  \raisebox{0.05\height}{\includegraphics[width=0.49 \linewidth]{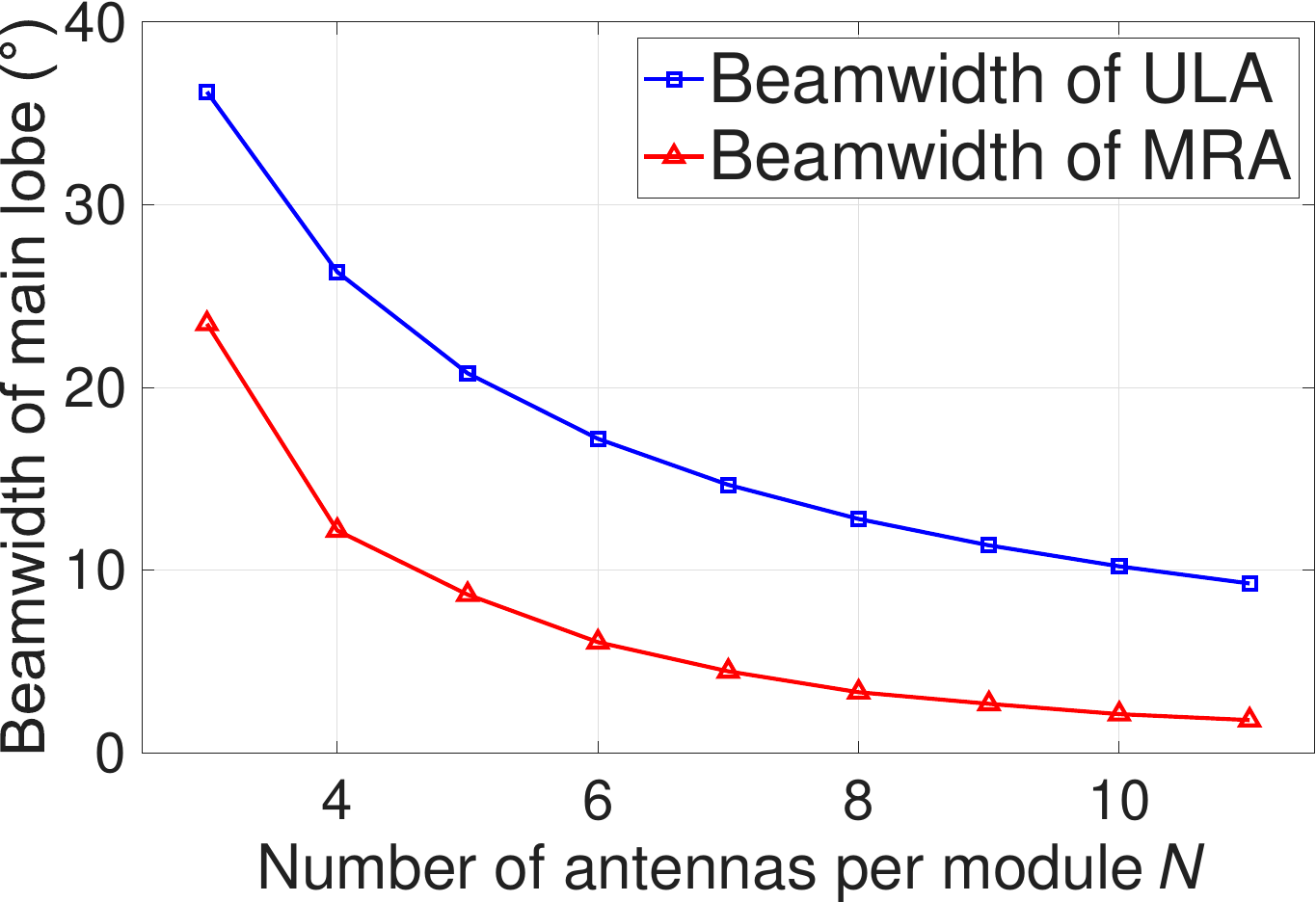}}}
  \caption{Illustration of MRA and ULA: (a) Configuration; (b) Beamwidth of main lobe.}
  \label{fig:ULA-MRA-value}
\end{figure}

The concept of minimum-redundancy array (MRA) was introduced in \cite{MRA_1968}. For a fixed number of antennas, the antenna locations are optimized to minimize the repetition of all possible distances between any two antennas, thereby expanding the array aperture and improving spatial resolution. 

For example, for an $N$-antenna ULA with the antenna spacing $d=\frac{\lambda}{2}$, its total aperture is $D_\mathrm{ULA}=Nd$. We can observe that a spacing of length $kd$ appears $(N-k)$ times, which wastes aperture resources and limits the spatial resolution of the array. In comparison, MRA minimizes the occurrence of spacings of different lengths by arranging antennas in a non-uniform manner. Specifically, when $N\leq$\ 4, a zero-redundancy array could be constructed, where all spacings from $d$ to $\frac{N(N-1)}{2}d$ each appear exactly once \cite{MRA_1968}, and the total aperture of MRA could reach its theoretical maximum $D_\mathrm{MRA}^\mathrm{zero}=\left(\frac{N(N-1)}{2}+1\right)d$, which is larger by a factor of about $\frac{N}{\rm{2}}$ compared to ULA. When $N>$\ 4, a zero-redundancy array cannot be achieved, but the arrangement of antennas could be optimized to minimize redundancy. \cite{MRA_1968} has given the configuration for MRA when $N\leq$\ 11. For larger $N$, \cite{MRA_N30} has given the configuration for MRA when $N\leq$\ 30, among which the largest aperture reaches $288d$, equal to 2.88 m at $f=$\ 15 GHz, sufficient to meet the size requirements of a single module for most practical array deployments. \cite{MRA_1968} also pointed out that for $N\geq$\ 20, the total aperture of MRA could be approximated as $D_\mathrm{MRA}^\mathrm{non-zero} \approx \frac{\rm{3}}{\rm{8}}N^2d$, which is larger by a factor of $\frac{\rm{3}}{\rm{8}}N$ compared to ULA. 

Fig.~\ref{fig:ULA-MRA-value} illustrates the array configuration for MRA and ULA when $N=$\ 6, as well as the beamwidth of main lobe for MRA and ULA under different $N$. Based on the analysis above, when $N \leq$\ 11, the main lobe beamwidth of an MRA is approximately $\frac{2}{N}$ times smaller than that of a ULA. 
For $N\geq$\ 20, the main lobe beamwidth of an MRA is approximately $\frac{8}{3N}$ times smaller than that of a ULA. It can be seen that MRA can achieve a larger array aperture and a narrower main lobe compared to ULA, thus possessing better spatial resolution and near-field beamfocusing capability. \cite{MRA_1995} has also revealed that MRA has deeper nulls for interferences when it is close to the desired signal direction.

\begin{figure}[t!]
  \centering 
  \subfigure[]{
  \includegraphics[width=0.47 \linewidth]{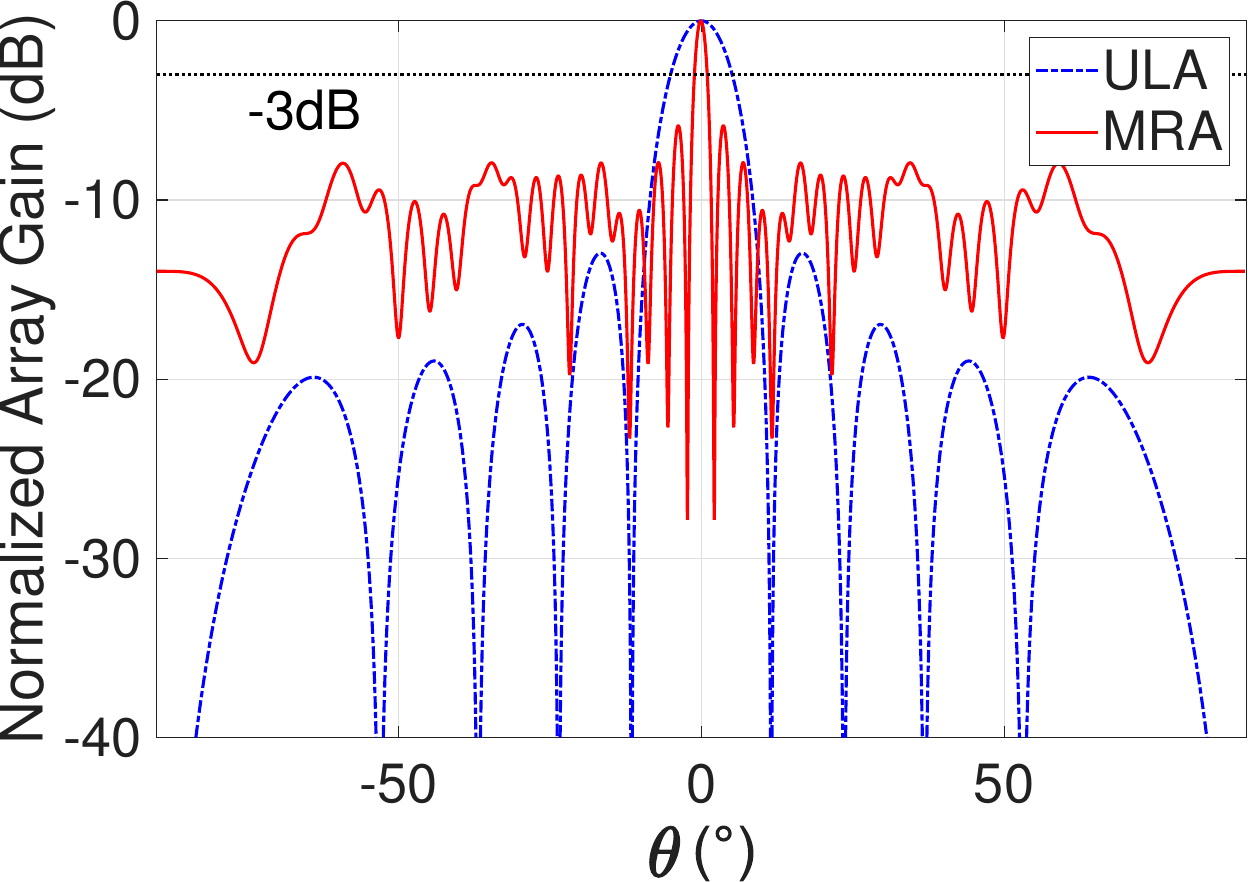}}
  \hfill
  \subfigure[]{
  \includegraphics[width=0.48 \linewidth]{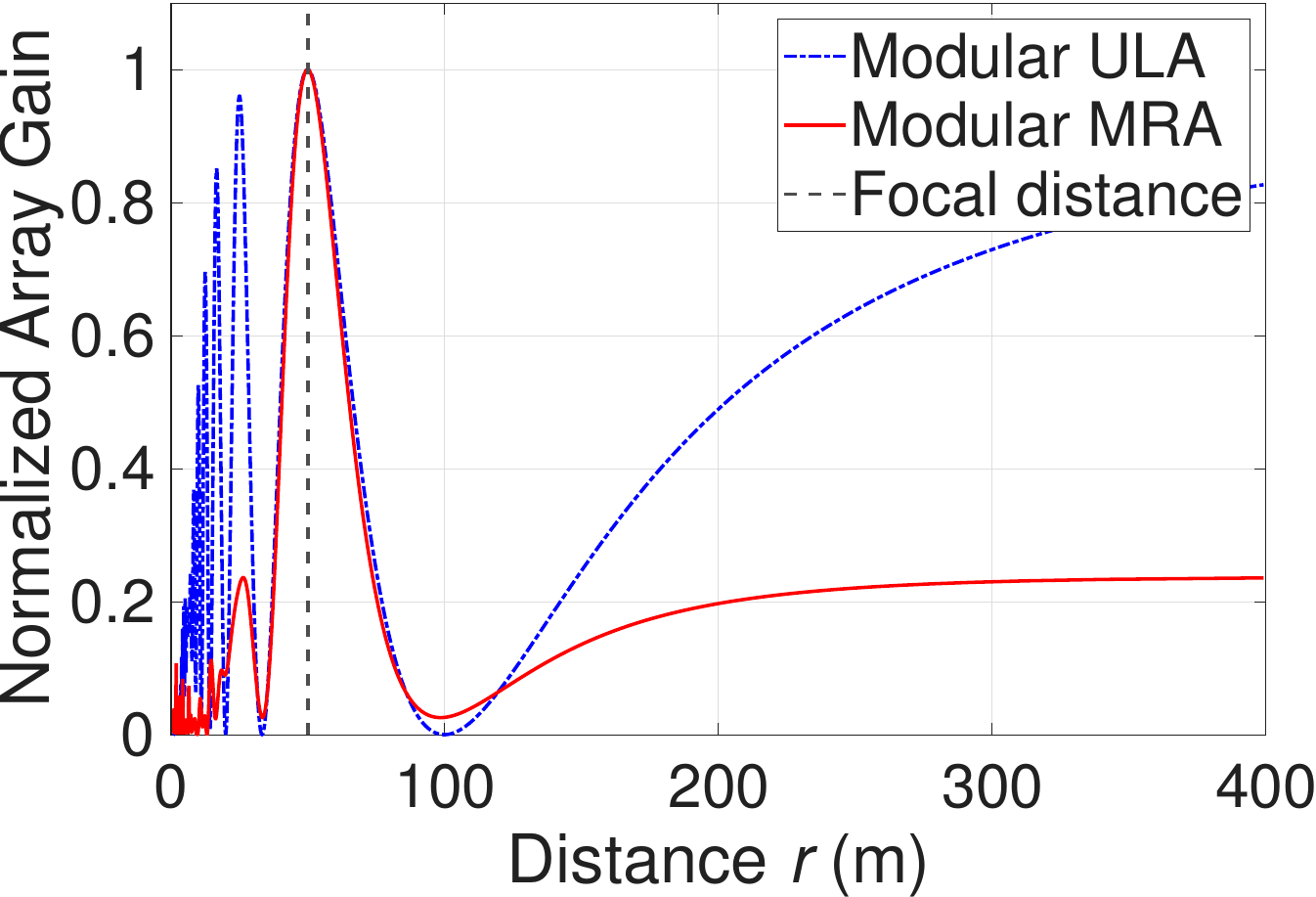}}
  \caption{Envelope comparison between ULA and MRA when $N=$\ 10: (a) in the angle domain; (b) in the distance domain.}
  \label{fig:MRA_suppression}
\end{figure}

\begin{figure}[t!]
  \centering 
  \subfigure[]{
  \includegraphics[width=0.48 \linewidth]{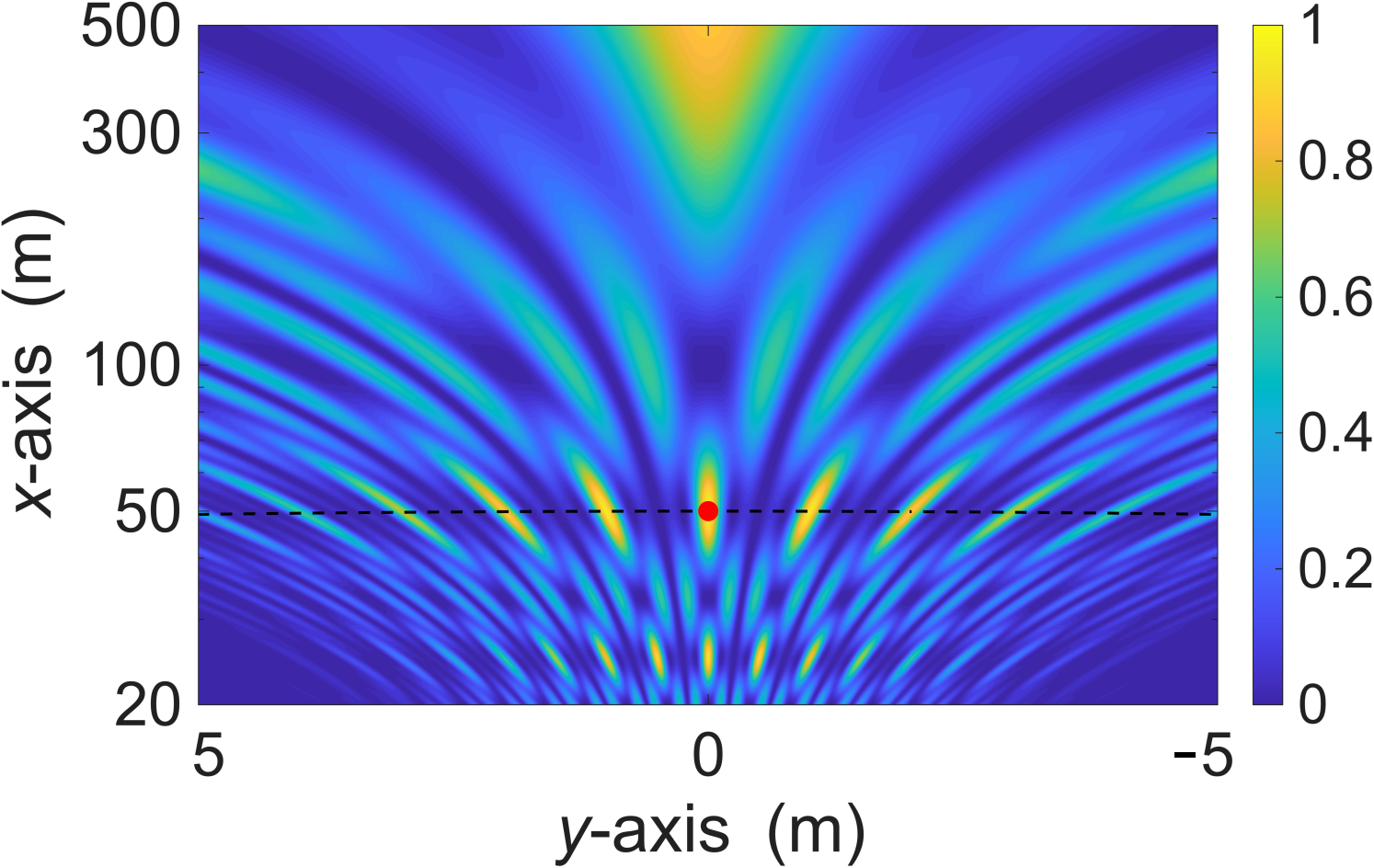}}
  \hfill
  \subfigure[]{
  \includegraphics[width=0.48 \linewidth]{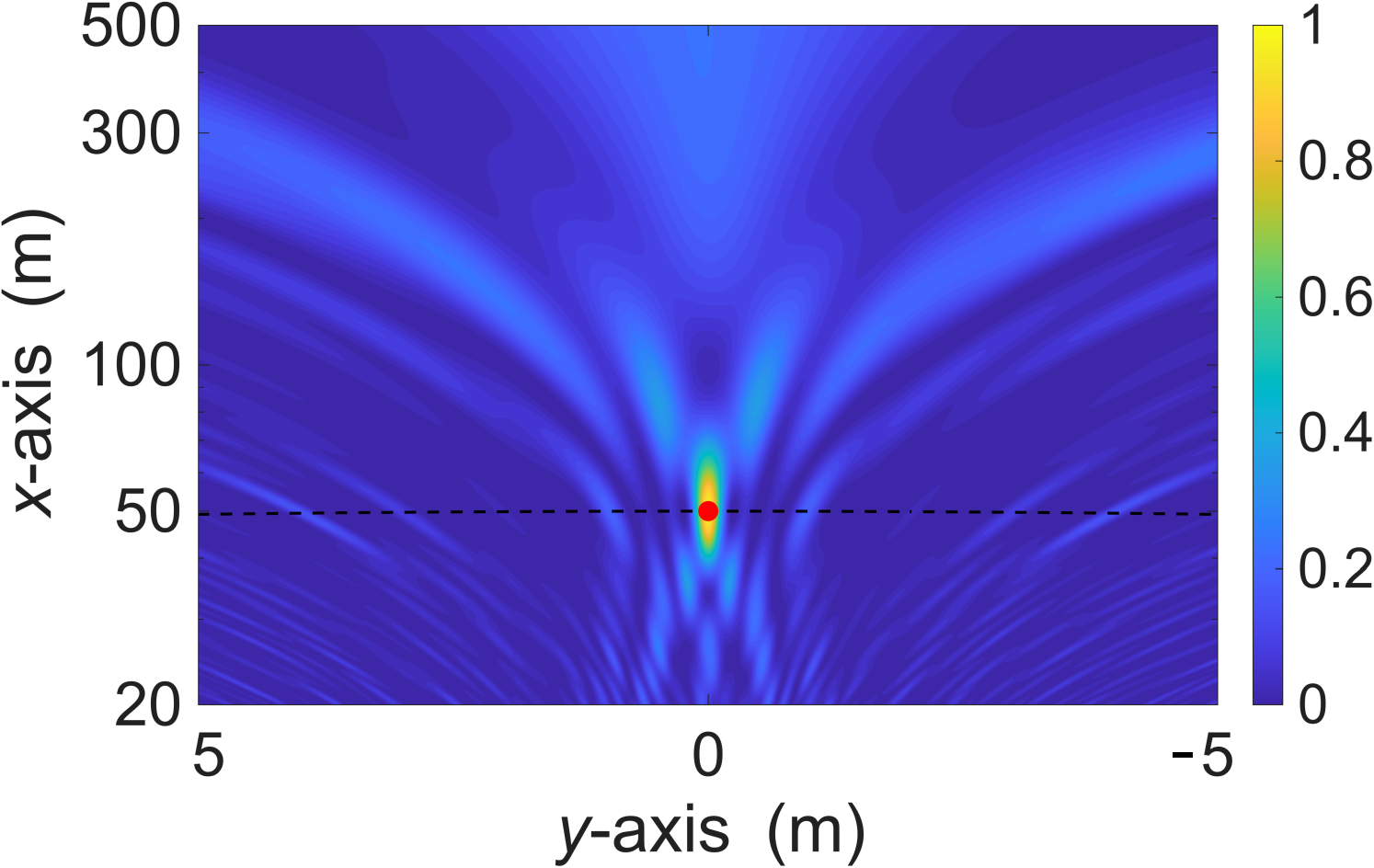}}
  \caption{Beam pattern comparison when $N=$\ 10, $L=$\ 4, $d_\mathrm{s}=$\ 1 m and $f=$\ 15 GHz: (a) M-ULA; (b) M-MRA.}
  \label{fig:ULA-MRA-pattern}
\end{figure}


In Section~\ref{section_analysis}, we have analyzed that for modular arrays, the slowly-varying envelope mainly depends on the structure of each single module. Based on the narrow main lobe characteristic of MRA, we combine MRA with the modular array, and propose the M-MRA structure to suppress near-field grating lobes. For M-MRA, each module is arranged as an MRA, whose narrow main lobe could ensure a narrow envelope in both the angle and distance domains. In this way, M-MRA can suppress near-field grating lobes in the two domains, as shown in Fig.~\ref{fig:MRA_suppression} (a) and Fig.~\ref{fig:MRA_suppression} (b) respectively. It should be noted that the sidelobe level of an MRA is generally higher than that of a ULA, which may introduce additional interferences in non-focusing regions. However, near-field beamfocusing is primarily concerned with the main lobe and its vicinity, and by applying amplitude weighting such as Hamming or Taylor windows to the outputs of each baseline of the MRA in the digital backend, the sidelobes of MRA can be suppressed to even lower levels \cite{MRA_1968}. Fig.~\ref{fig:ULA-MRA-pattern} shows the beam pattern under the same array parameters, which verifies the effectiveness of M-MRA in near-field grating lobe suppression.



In the following section, we will simulate and verify the grating lobe suppression capability of proposed M-MRA structure, as well as its improvement of the spectrum efficiency of multi-user near-field communications.

\section{Simulation Results}
\label{sim}

In this section, simulations are carried out to verify the effectiveness of deploying proposed M-MRA to suppress near-field grating lobes and enhance spectrum efficiency. Specifically, a downlink modular array communication system introduced in Section \ref{section system model} is considered. 
The carrier frequency $f$ is 15 GHz, and the corresponding unit of antenna spacing is $d=\frac{\lambda}{2}=$\ 1 cm. The beam patterns of M-ULA and M-MRA in multi-user near-field communications are first provided. Then the performance of average spectrum efficiency for six forms of arrays are compared and analyzed.

\subsection{Multi-user Near-Field Beamfocusing}

In this subsection, we set the number of users $K=$\ 4, which is a typical value in current communication systems. The number of modules is $L=$\ 4, the number of antennas per module is $N=$\ 10 and the inter-module spacing is $d_\mathrm{s}=$\ 1 m. The four users are at $(y,x)=$\ (0, 30 m), (0, 50 m), (-2 m, 55 m) and (3 m, 60 m), respectively. All users are located within the near-field region of the modular array, which aligns with typical urban micro-cell deployment scenarios. The user at (3 m, 60 m) falls outside the first grating lobe region of the user at (0, 50 m), while the user at (-2 m, 55 m) is deliberately placed within the grating lobe region of the user at (0, 50 m). 
The beam patterns for M-ULA and M-MRA are shown in Fig.~\ref{fig:multi pattern}, from which we can see that, based on the maximum ratio transmission (MRT), the total beam pattern is the superposition of $K$ sets of beam patterns focused on the $K$ users. 

We first start with the user at (3 m, 60 m) and the user at (0, 50 m). For M-ULA, since the grating lobe range of each set of beam pattern is rather large, there exist strong interferences among the patterns after superposition, making it difficult to achieve expected near-field beamfocusing for all users at the same time. Although the precoding vector for the user at (3 m, 60 m) allows a focusing region at (3 m, 60 m), the beam patterns after superposition prevent the formation of the expected high-gain region here. In contrast, proposed M-MRA achieves significantly improved near-field beamfocusing for each user, mainly because the MRA forms a narrower envelope in both the angle and distance domains, effectively preventing each set of beam pattern from overlapping. Even with a relatively small $N=$\ 10, the M-MRA still forms high-precision beamfocusing regions at each user. 

\begin{figure}[t!]
  \centering 
  \subfigure[]{
  \includegraphics[width=0.48 \linewidth]{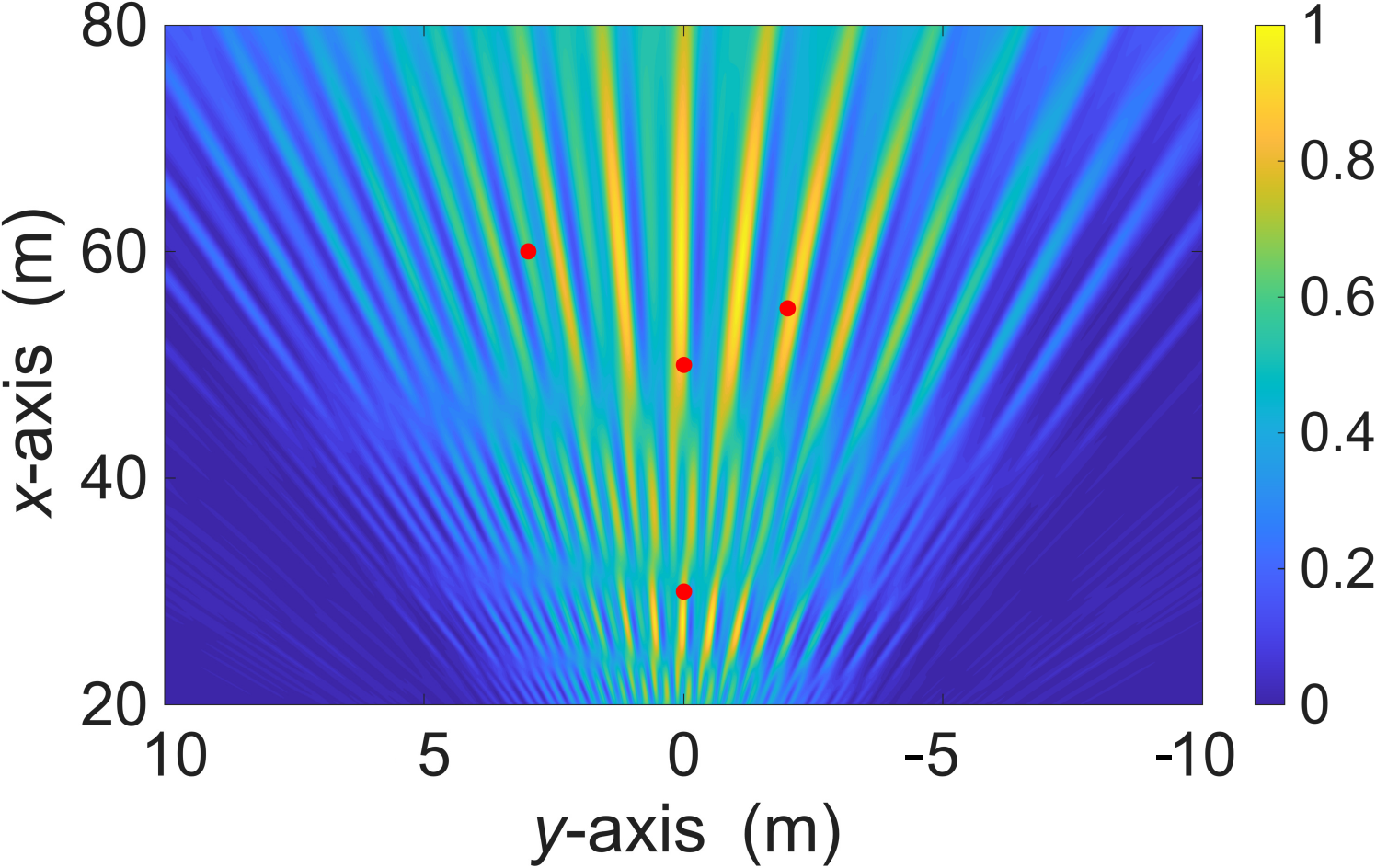}}
  \hfill
  \subfigure[]{
  \includegraphics[width=0.48 \linewidth]{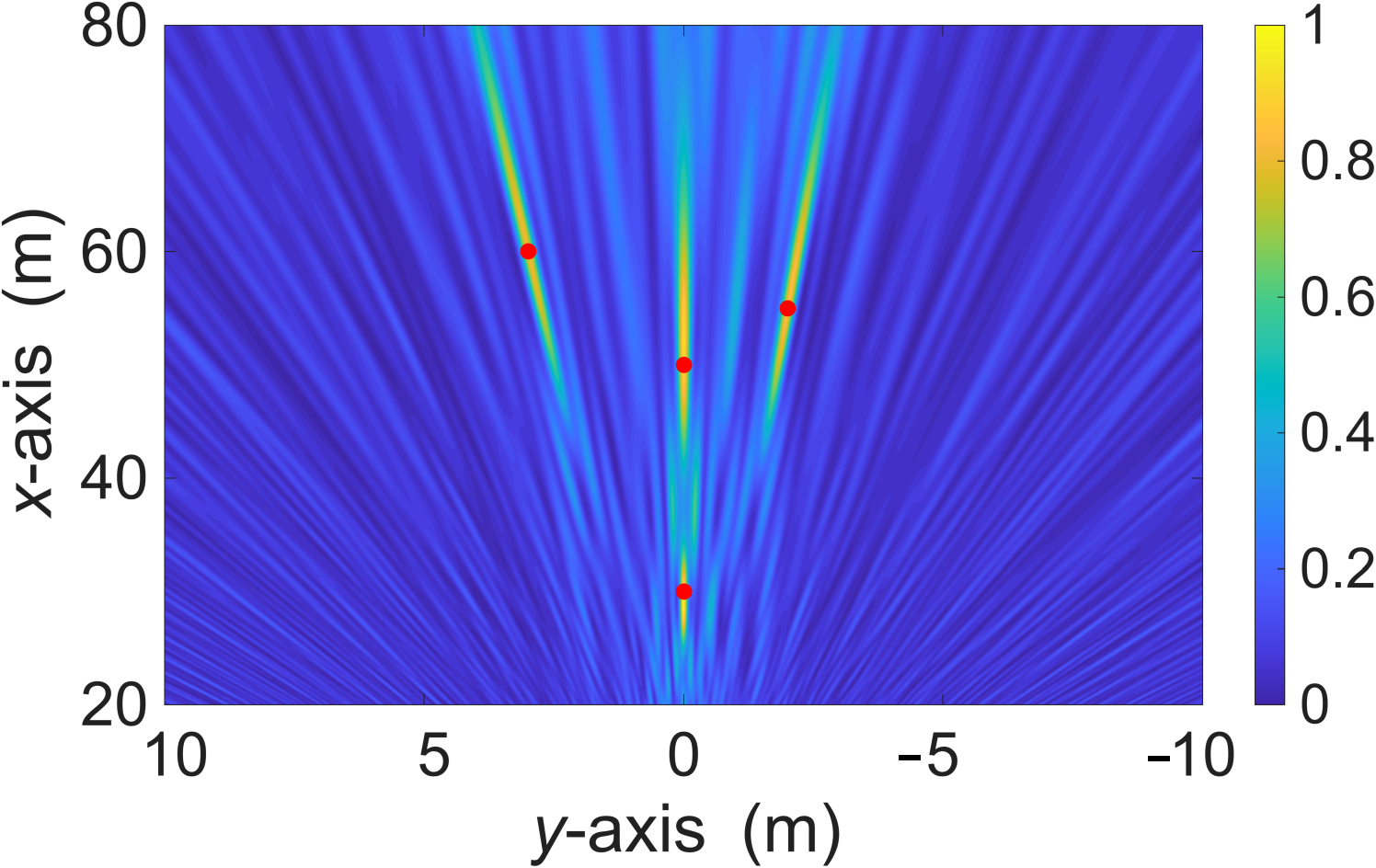}}
  \caption{Multi-user beam pattern comparison when $N=$\ 10, $L=$\ 4, $d_\mathrm{s}=$\ 1 m and $f=$\ 15 GHz: (a) M-ULA; (b) M-MRA.}
  \label{fig:multi pattern}
\end{figure}

We further pay attention to the user at (-2 m, 55 m), which is in the grating lobe region of the user at (0, 50 m). For M-ULA, though the total array gain at (-2 m, 55 m) is high, a significant portion of the energy should have been received by the user at (0, 50 m), causing non-negligible inter-user interferences. By comparison, proposed M-MRA reduces the overlap of high-array-gain regions among different users, achieving near-field beamfocusing for each user with higher precision. 



\subsection{Multi-user Average Spectrum Efficiency}

In this subsection, simulation results are provided to verify that M-MRA significantly improves the average spectrum efficiency upper bound for multi-user near-field communications. The signal model has been introduced in Section \ref{section system model}. The following types of arrays are considered for comparison:

\begin{itemize}

\item \textbf{M-ULA}: We set that the number of modules is $L=$\ 4, the number of antennas per module is $N=$\ 10. Modules are arranged uniformly with the adjacent module center spacing being $d_\mathrm{s}=$\ 1m. The total array aperture is $D=$\ 3.1 m, and the total number of antennas is $Q_\mathrm{M-ULA}=$\ 40. Antennas within each module are arranged as a ULA, as proposed in \cite{kosasih2026modular}.

\item\textbf{Proposed M-MRA}: We keep $L$, $N$, $d_\mathrm{s}$ and the total number of antennas the same as the M-ULA. Each module are designed as an MRA.


\item\textbf{TAS M-ULA}: We keep $L$, $N$ and the total number of antennas the same as the M-ULA. The only difference is that each module is distributed non-uniformly using the tapered aperture sampling (TAS) proposed in \cite{parvini2026tapered}.

\item\textbf{ECA M-ULA}: We replace each single antenna in the extended coprime array (ECA) proposed in \cite{zhou2025sparse} with a subarray. For the following simulation, parameters are designed to ensure that the total number of antennas and the array aperture are as close as possible to that of the M-ULA.

\item \textbf{Upper-bound ULA}: A collocated ULA with the same total array aperture as the M-ULA, requiring $Q_\mathrm{U-ULA}=$\ 310 antennas in total. Within the same array aperture, its antenna arrangement is the most dense, providing the most comprehensive spatial sampling and boasting the highest spatial resolution.

\item \textbf{Lower-bound ULA}: A collocated ULA with the same number of antennas as the M-ULA, requiring $Q_\mathrm{L-ULA}=$\ 40 antennas in total. Under the same number of antennas, its array aperture is the smallest, making its near-field range and capability of near-field beamfocusing the most limited in most cases.

\end{itemize}

\begin{figure}[t!]
	\centering
	\includegraphics[width=0.47\textwidth]{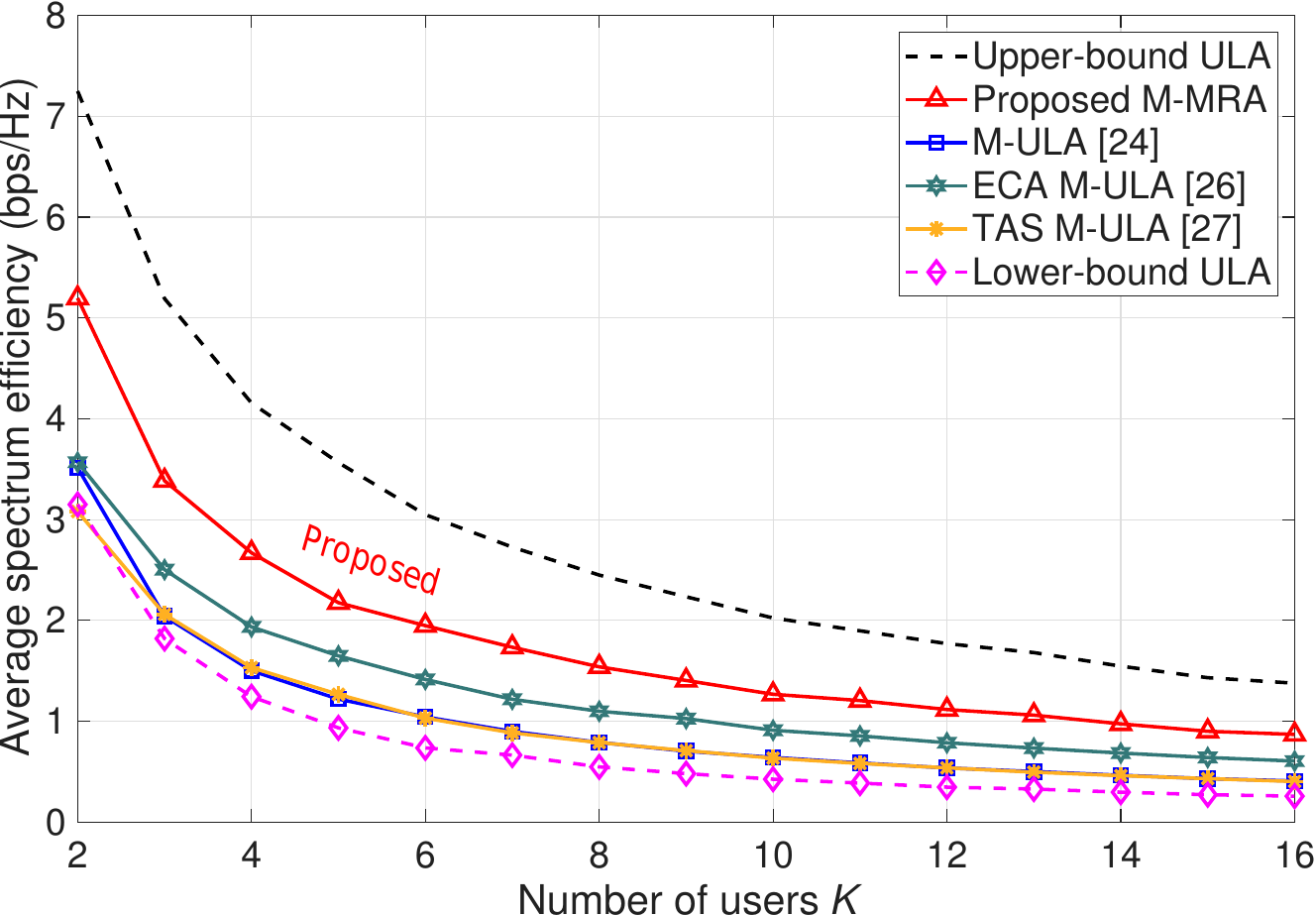}
	\caption{Average spectrum efficiency of five types of arrays against the number of users $K$.}
	\label{fig:SE angle uniform}
\end{figure}

Fig.~\ref{fig:SE angle uniform} shows the variation of the average spectrum efficiency against the number of users $K$ when users are uniformly distributed in the angular range [-3$^{\circ}$, 3$^{\circ}$] and distance range 
[20, 100] m. In this scenario, the distribution of user angles is relatively concentrated, so that the effectiveness in resisting inter-user interferences primarily relies on the array's resolution in the angle domain. Due to the narrow envelope characteristic, proposed M-MRA outperforms other modular arrays. Taking a case with $K=$\ 4 as an example, the average spectrum efficiency of the M-ULA is approximately 1.5 bps/Hz. With the same number of antennas, the average spectrum efficiency of proposed M-MRA is approximately 2.7 bps/Hz, representing an improvement of about 80\%.


It should be pointed out that MRA reduces the width of main lobe at the cost of increasing the sidelobe height, as shown in Fig.~\ref{fig:MRA_suppression} (a). Therefore, as the angle range of user random distribution increases, the performance improvement brought by M-MRA will decrease. As demonstrated in Fig.~\ref{fig:SE angle range}, 
when the half-angle range of the random distribution of users reaches about 30$^{\circ}$, the average spectrum efficiency that M-MRA can achieve is no higher than that of M-ULA. However, this will not limit the use of M-MRA in the near-field beamfocusing. Taking $N=$\ 10 as an example, the half-angle width of the main lobe of ULA is about 5$^{\circ}$. Since the near-field beamfocusing mainly acts within the main lobe beamwidth, as long as the M-MRA has a significant performance improvement within about [-5$^{\circ}$, 5$^{\circ}$], it will bring corresponding near-field beamfocusing gain.


\begin{figure}[t!]
	\centering
	\includegraphics[width=0.47\textwidth]{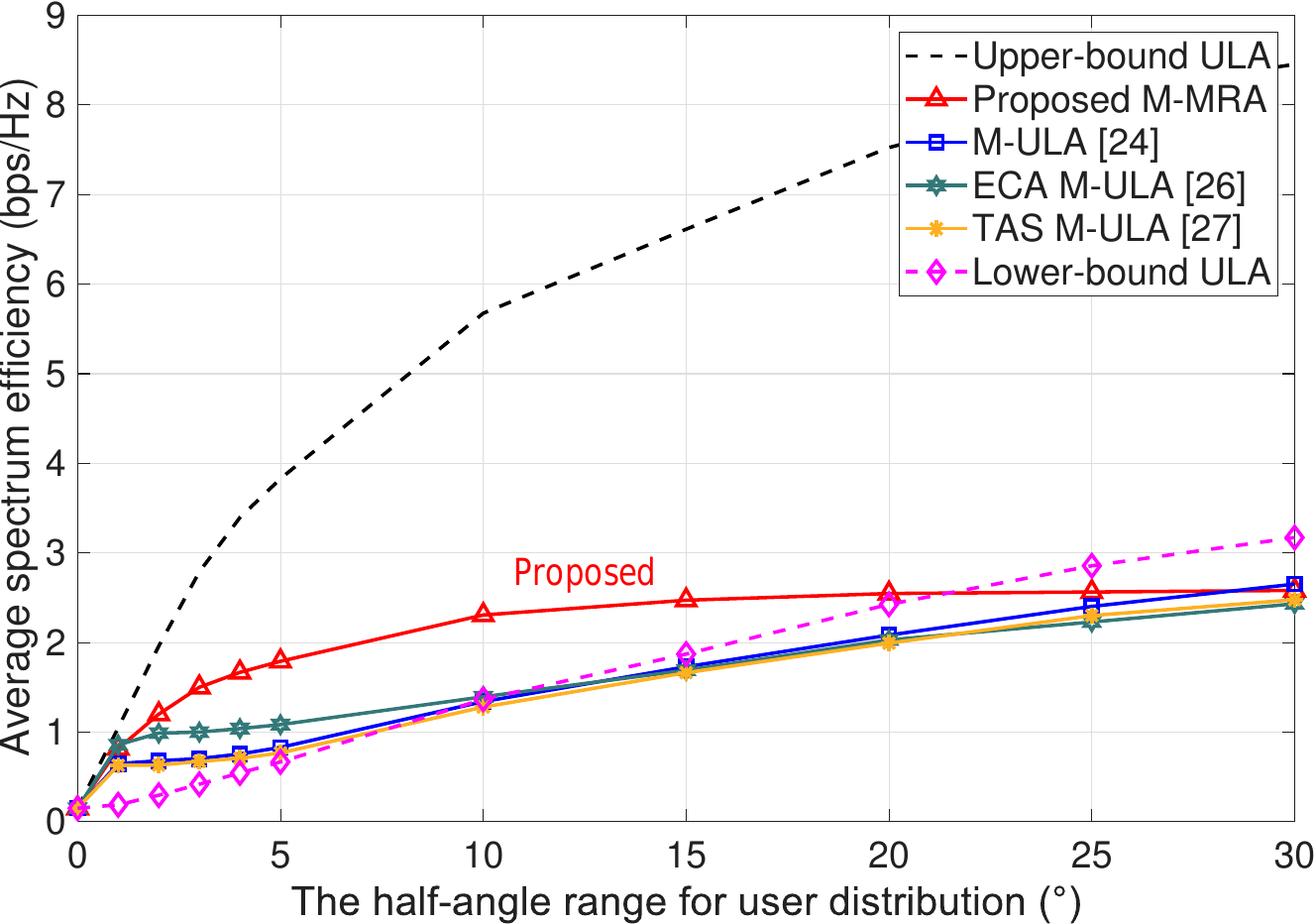}
	\caption{Average spectrum efficiency of five types of arrays against the half-angle range for user distribution.}
	\label{fig:SE angle range}
\end{figure}

\begin{figure}[t!]
	\centering
	\includegraphics[width=0.47\textwidth]{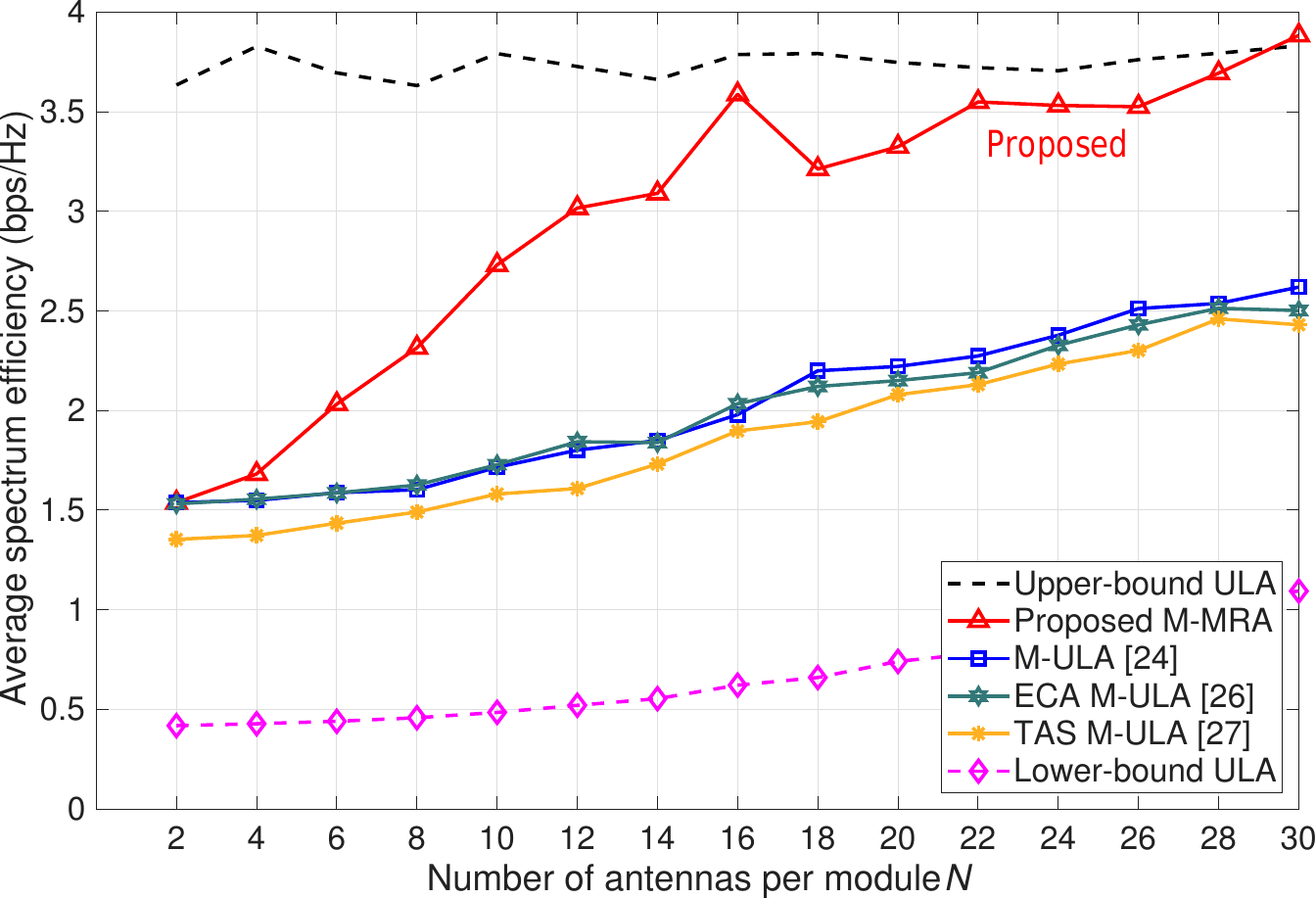}
	\caption{Average spectrum efficiency of five types of arrays against the number of antennas per module $N$.}
	\label{fig:SE N}
\end{figure}

Fig.~\ref{fig:SE N} illustrates the average spectrum efficiency versus the number of antennas per module $N$. When the number of modules is limited, the ECA M-ULA degenerates into sparse array. The TAS M-ULA even performs worse than M-ULA, for the non-uniform sampling requires a sufficient number of modules. Besides, the curve labeled as the upper-bound ULA remains approximately flat, for its total number of antennas is almost unchanged. The lower-bound ULA provides the lowest spectrum efficiency over the entire range of $N$, for its compact physical aperture limits the near-field beamfocusing accuracy and spatial resolution. As $N$ increases, the effectiveness of spectrum efficiency enhancement for proposed M-MRA is more evident. An M-MRA with $N=$\ 8 and an M-ULA with $N=$\ 24 exhibit comparable performance, when the M-MRA only uses about 33\% of the number of antennas.

\begin{figure}[t!]
	\centering
	\includegraphics[width=0.45\textwidth]{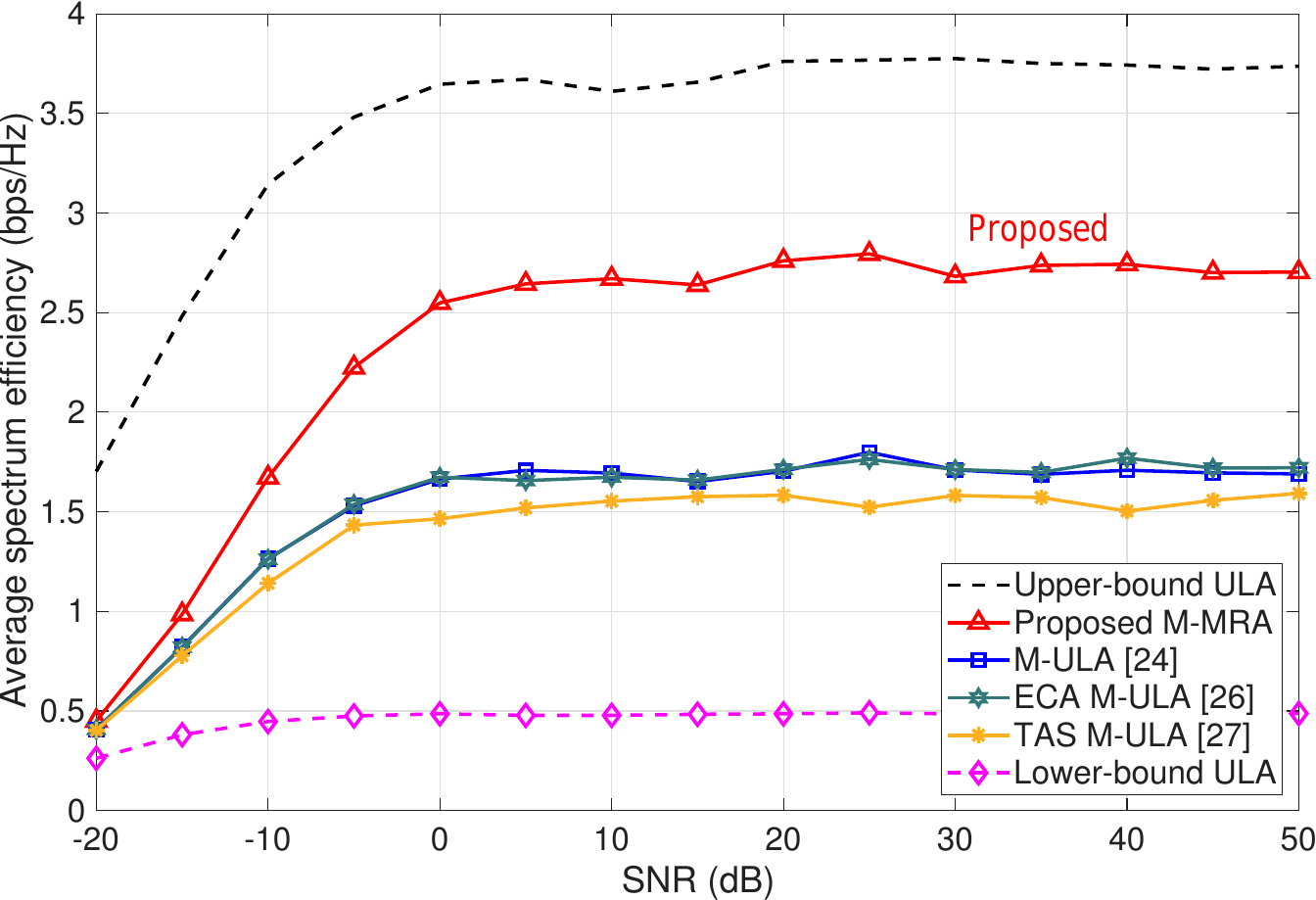}
	\caption{Average spectrum efficiency of five types of arrays against SNR.}
	\label{fig:SE SNR}
\end{figure}

 The average spectrum efficiency versus SNR is presented in Fig.~\ref{fig:SE SNR}. The spectrum efficiency of all array configurations increases with the SNR and gradually saturates in the high-SNR regime, indicating a transition from a noise-limited system to an interference-limited system. In the low-SNR regime, approximately from -20 to -10 dB, the receiver noise is the dominant performance-limiting factor. Therefore, increasing the transmit SNR directly improves the user SINRs, resulting in a rapid increase in the average spectrum efficiency. 
 When the SNR exceeds approximately 10 dB, all curves tend to saturate. In this regime, the noise term becomes negligible, and the user SINR is mainly determined by the ratio between the desired-signal power and the residual inter-user interferences. These results demonstrate that, in high-SNR multi-user near-field systems, improving the spatial resolution to suppress grating lobe interferences is more effective than simply increasing the transmit power.


We have mentioned in Section \ref{section_analysis} that, as $d_{\mathrm{s}}$ increases, the beamdepth of the main lobe decreases, which improves the resolution in the distance domain, but more grating lobes emerge, which causes stronger interferences. The merits and demerits of increasing $d_{\mathrm{s}}$ coexist, so that for modular arrays, there is likely to be an inflection point regarding the uniform inter-module spacing $d_{\mathrm{s}}$, where the effects of the two can be optimally balanced. Fig.~\ref{fig:SE ds} illustrates the existence of inflection points regarding $d_{\mathrm{s}}$. This observation corroborates our analysis in Section \ref{section_analysis} and provides a guideline for modular array design that, the inter-module spacing should be chosen carefully rather than maximized blindly to enlarge the aperture.

\section{Conclusions}

In this paper, we first reveal the existence of near-field grating lobes in the distance domain based on the analysis of near-field beam patterns for modular arrays. 
We further explore how to suppress near-field grating lobes for the M-ULA. We find that increasing the number of antennas per module can suppress grating lobes, but the grating lobe interferences are severe under a limited number of antennas. This limitation inspires us to propose the M-MRA structure by redesigning the subarray configuration. Simulation results verify that the proposed M-MRA can suppress near-field grating lobes and enhance the spectrum efficiency of multi-user near-field communications significantly. 
This work proposes an array configuration that balances near-field range expansion and grating lobe suppression, providing a feasible solution for deploying modular arrays in near-field communications. 
For future work, one may further explore the inflection point regarding the inter-module spacing, or combine subarray configuration designing with module position optimization \cite{Emil2026modulecenter}, or extend the analysis from linear arrays to planar arrays \cite{Emil2026twoD}, \cite{distanceUPA2026}.

\begin{figure}[t!]
	\centering
	\includegraphics[width=0.445\textwidth]{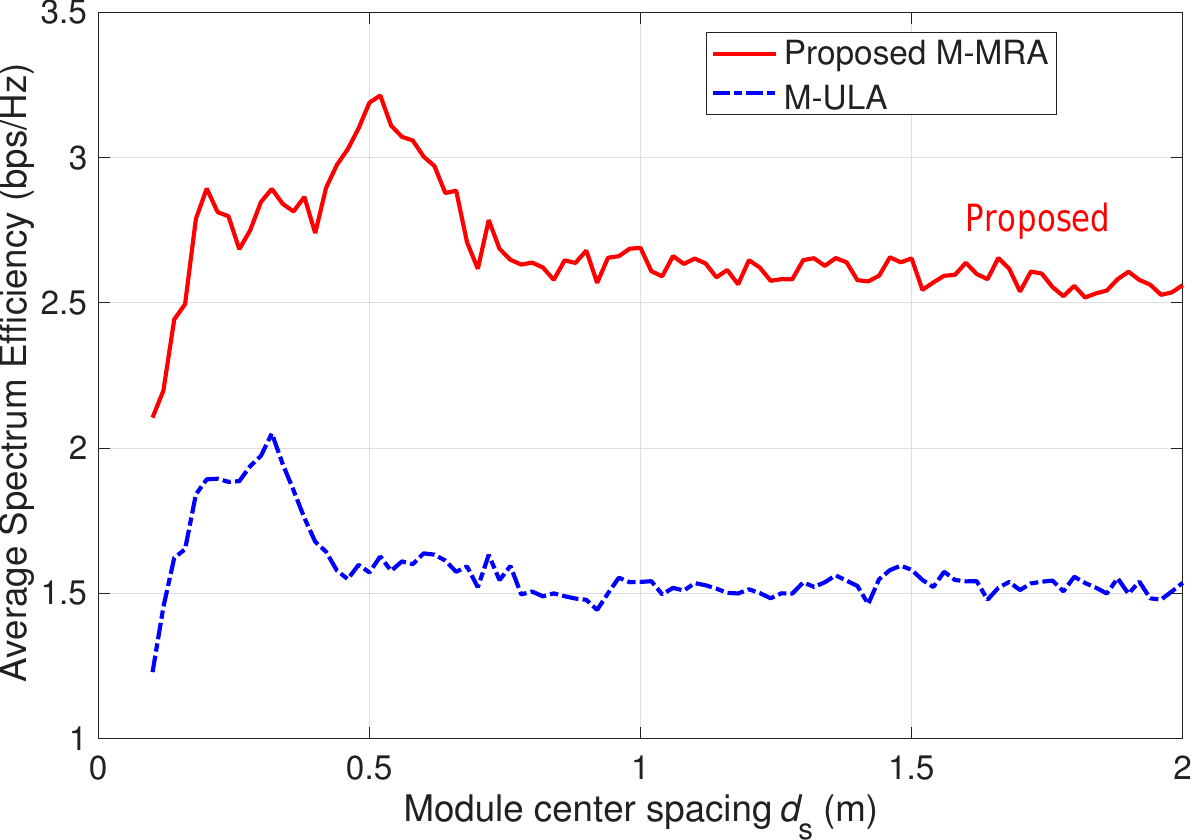}
	\caption{Average spectrum efficiency of three types of modular arrays against the module center spacing $d_\mathrm{s}$.}
	\label{fig:SE ds}
\end{figure}

\begin{appendices}

\section{Explanation of the distinctiveness of L\ =\ 2}
\label{distinctiveness_L_2}

We consider the case where the modular array consists of $L=$\ 2 modules placed symmetrically about the array center, thus the sub-array centers are located at $ Y_0 = -\frac{d_{\mathrm{s}}}{2}$ and $Y_1 = \frac{d_{\mathrm{s}}}{2}$. Substituting $Y_0$, $Y_1$ into (\ref{eq:G_C_l}), we have
\begin{equation}
    \label{eq:G_C_l_2}
    G(r, \theta) = \left| \frac{1}{2} \left( C_0 e^{-j\kappa \frac{d_{\mathrm{s}}}{2} \Delta} e^{j\kappa \frac{d_{\mathrm{s}}^2}{8} \phi} + C_1 e^{j\kappa \frac{d_{\mathrm{s}}}{2} \Delta} e^{j\kappa \frac{d_{\mathrm{s}}^2}{8} \phi} \right) \right|^2.
\end{equation}
Since both terms share the identical quadratic phase factor \(e^{j\kappa \frac{d_{\mathrm{s}}^2}{8} \phi}\), (\ref{eq:G_C_l_2}) can be simplified to 
\begin{equation}
    \label{eq:G_C_l_2_com}
    G = \frac{1}{4} \left| C_0 e^{-j\kappa \frac{d_{\mathrm{s}}}{2} \Delta} + C_1 e^{j\kappa \frac{d_{\mathrm{s}}}{2} \Delta} \right|^2.
\end{equation}
Based on the definition $C_l \triangleq \frac{1}{N} \sum_{n=0}^{N-1} e^{j\kappa y_n (\Delta + Y_l \phi)}$ in Section \ref{section_analysis}, for \(L=2\), we have
\begin{equation}
    C_0 = \frac{1}{N} \sum_{n} e^{j\kappa y_n (\Delta - \frac{d_{\mathrm{s}}}{2}\phi)}, \quad
    C_1 = \frac{1}{N} \sum_{n} e^{j\kappa y_n (\Delta + \frac{d_{\mathrm{s}}}{2}\phi)}.
\end{equation}
Since the intra-module antenna positions \(y_n\) are symmetric about the module center, both \(C_0\) and \(C_1\) are real-valued \textit{sinc} functions. Expanding (\ref{eq:G_C_l_2_com}) using \(|X+Y|^2 = |X|^2 + |Y|^2 + 2\Re\{XY^*\}\), and noting that \(C_0, C_1 \in \mathbb{R}\), we obtain
\begin{equation}
    \label{eq:G_C_l_2_final}
    G = \frac{1}{4} \left[ C_0^2 + C_1^2 + 2 C_0 C_1 \cos\left(\kappa d_{\mathrm{s}} \Delta\right) \right],
\end{equation}
which shows that the phase term determining the interference oscillation, \(\cos(\kappa d_{\mathrm{s}} \Delta)\), 
is independent of the curvature difference \(\phi\), and thus independent of the observation distance \(r\). Although \(C_0\) and \(C_1\) depend on \(\phi\) and may exhibit peaks, the amplitude of these peaks is limited by the product of $C_0 C_1$. Even if both $C_0$ and $C_1$ simultaneously reach their respective first grating lobes with an amplitude of about 0.2, the amplitude factor for $\cos\left(\kappa d_{\mathrm{s}} \Delta\right)$ is only about 0.04
. Consequently, for an M-ULA with \(L=2\), the distance variation of \(G\) is limited to the slowly-varying envelope contributed by the subarray factors \(C_0\) and \(C_1\), making the beam pattern not exhibit grating-lobe level peaks.

\end{appendices}

\footnotesize
\bibliographystyle{IEEEtran}
\bibliography{IEEEabrv,reference}

\vfill

\end{document}